\newtheorem{proposition}{Proposition}
\theoremstyle{definition}
\newcommand{\diff}{\mathop{}\!\mathrm{d}} 
\newcommand{\difft}{\frac{\diff }{\diff t}}
\newcommand{\hilbert}{\mathcal{H}}
\newcommand{\daggah}{^\dagger}
\newcommand{\Id}{\mathds{1}}
\DeclareMathOperator{\Tensor}{\otimes}
\DeclarePairedDelimiter{\bra}{\langle}{|}
\DeclarePairedDelimiter{\ket}{|}{\rangle}
\newcommand{\ketbra}[2]{\ket{#1}\bra{#2} }
\renewcommand{\Im}{\mathrm{Im}}
\newcommand{\schr}{Schr\"{o}dinger\xspace}
\newcommand{\ito}{It\={o}\xspace}
\begin{document}

\title{Quantum trajectories and reduced dynamics in time-correlated environments}

\author{Pietro De Checchi}
\email[Corresponding at: ]{pietro.dechecchi@phd.unipd.it}
\affiliation{Department of Mathematics, University of Padova, Via Trieste 63, Padova 35131,  Italy}
\author{Federico Gallina}
\altaffiliation{Currently at Department of Physics, University of Ottawa, Canada}
\affiliation{Department of Chemical Sciences, University of Padova, Via Marzolo 1, Padova 35131,  Italy}
\author{Barbara Fresch}
\affiliation{Department of Chemical Sciences, University of Padova, Via Marzolo 1, Padova 35131,  Italy}
\affiliation{Padua Quantum Technologies Research Center, University of Padova, via Gradenigo 6/A, Padua 35131, Italy}
\author{Giulio G. Giusteri}
\affiliation{Department of Mathematics, University of Padova, Via Trieste 63, Padova 35131,  Italy}
\affiliation{Padua Quantum Technologies Research Center, University of Padova, via Gradenigo 6/A, Padua 35131, Italy}

\date{January 21, 2026}

\begin{abstract}
    The stochastic \schr equation (SSE) provides a trajectory-level route to simulate the dynamics of open quantum systems with applications ranging from molecular processes to quantum technologies. 
    We study a colored-noise extension of the SSE based on an Ornstein–Uhlenbeck (OU) noise drive, and benchmark its ensemble-averaged dynamics against the standard white-noise SSE and against a fluctuating OU random Hamiltonian.
    When the environment exhibits a finite correlation time, averaging over pure-state trajectories yields master equations that are generally open-form and not of Lindblad type, yet remain positive by construction.
    By considering the differential of the OU process, we define an effective correlated noise, whose properties we analyze and use to formulate an SSE unraveling of its associated open-form quantum master equation. 
    We show that the averaged dissipator separates into a Lindblad contribution stemming from the white-noise component, and additional correlation terms arising from the fluctuations of the OU Hamiltonian.
    To obtain a practical closed description and physical intuition, we introduce a Redfield-inspired perturbative closure for these correlation terms, providing an effective master equation for the colored SSE. 
    For a two-level system, the resulting dynamics exhibit long-lived coherences, nontrivial stationary (including oscillatory) states, and multi-timescale relaxation, rationalized through the components of a time-dependent Redfield tensor. 
\end{abstract}

\maketitle

\section{Introduction \label{sec:Intro}}

The understanding of quantum systems interacting with fluctuating and structured environments remains a central topic in many areas of physics and chemistry, from spectroscopy to charge and energy transfer, to quantum technologies, to name a few. 

For example, in quantum computing, where coherences are exploited for computational purposes, and dissipation is typically detrimental, a detailed characterization of the environmental noise is essential to achieve so-called error mitigation and error correction. \cite{Rahman2021QuantumNoise,Butler2024OptimizingHare, Cattaneo2023QuantumComputers}
On the other hand, the interplay between coherent evolution and environment-induced dephasing can enhance the efficiency of energy transport in chromophoric aggregates of some photosynthetic organisms,\cite{Plenio2008Dephasing-assistedBiomolecules, Rebentrost2009Environment-assistedTransport} inspiring the design of artificial light-harvesting materials and systems mimicking photosynthesis.\cite{Uchiyama2018EnvironmentalTransport} 

From a theoretical point of view, an effective description of the influence of the environment on the dynamics of an open quantum system is a long-standing challenge that led to the formulation of different dynamical models and implementation schemes. In this work, we contribute to this program through the investigation of a stochastic \schr equation in which non-Markovian effects are introduced through the differential of an exponentially correlated Ornstein–Uhlenbeck process \cite{Barchielli2010StochasticNoise} rather than through white-noise increments. 

Stochastic \schr equations (SSEs) provide a trajectory-based representation of reduced dynamics that complements density-matrix approaches.\cite{Biele2012ASystems,Barchielli2009Quantum782,petruccione2002,Jacobs2006AMeasurement}
By propagating pure quantum states driven by stochastic processes and reconstructing observables through ensemble averaging, SSEs offer both conceptual transparency and numerical scalability, particularly advantageous for high-dimensional systems for which parametrization and integration of master equations become prohibitive.\cite{Coccia2018ProbingSystems,DallOsto2024StochasticSystems}
In their most commonly used form, SSEs are formulated in terms of stochastic differentials driven by Wiener processes, reflecting the assumption of short environmental memory times. Under this ``white-noise'' assumption, the ensemble-averaged dynamics is governed by a Markovian master equation of Lindblad form, ensuring complete positivity and trace preservation of the reduced density operator. 
This correspondence between diffusive SSEs driven by Wiener increments and Lindblad generators underpins the standard quantum trajectory formalism\cite{Barchielli2009Quantum782, petruccione2002} and its interpretations in terms of specific environment models or measurement protocols, from quantum jumps to continuous weak measurement back-action.\cite{petruccione2002,Jacobs2006AMeasurement,Barchielli2009Quantum782}

However, the assumption of white noise and the resulting Markovian dynamics are often inadequate in molecular sciences and emerging quantum technologies, where environments exhibit finite correlation times, structured spectral densities, and nontrivial back-action effects.
Notable frameworks were introduced to overcome the white noise formulation. 
A central line of development starts from microscopic system–boson-bath models and yields formally exact diffusive non-Markovian stochastic \schr equations 
whose driving noise inherits the bath correlation function: the paradigmatic example is non-Markovian quantum state diffusion.\cite{Diosi1997TheSystems,Diosi1998Non-MarkovianDiffusion,Stockburger2001Non-MarkovianDiffusion,Suess2014HierarchyDynamics} 
However, because of the technical difficulties encountered in solving these equations, studies reporting on applications remain scarce. 
A different strategy starts from the target time-local master equation (not necessarily in Lindblad form, e.g., the Redfield equation) and constructs the stochastic unraveling. 
In some cases, this approach can require doubling the Hilbert space and allowing negative trajectory weights to reproduce dynamics that are not completely positive by construction.\cite{Gaspard1999Non-MarkovianEquation,Gaspard1999SlippageEquation,Kleinekathofer2002StochasticClass,Kondov2003StochasticProblems,Vogt2013StochasticEnvironment}

In this work, we elaborate on another perspective, hinging on a direct generalization of the white noise version of the SSE. 
The memory term is introduced in the SSE dynamics by an explicit colored noise process, with no reference to a microscopic quantum bath. This approach has been introduced in Ref.~[\onlinecite{Barchielli2010StochasticNoise}] focusing on its interpretation in terms of a continuous-measurement process. 
More recently, the same formal setting has been elaborated with a more practical focus in Refs.~[\onlinecite{DeKeijzer2025QubitNoise,DeKeijzer2025Fidelity-enhancedControl,JansevanRensburg2025FidelityNoise}] where the origin of the colored noise is identified with the random fluctuations of qubit driving fields.
Here, we will analyze the average dynamics generated by the colored SSE through a direct comparison with the analog dynamics generated by (i) the standard white noise SSE and (ii) a random Hamiltonian defined through the related OU process. 
The formal presentation and developments are kept general, while the numerical investigation focuses on the dynamics of a two-level system: the minimal model relevant both in quantum information processing (qubit) and in modeling internal relaxation and energy transfer in molecular systems.\cite{Reimers2015AAromaticity,Giavazzi2022TheDyes} 

On the general ground of the master equation obtained by averaging the quantum trajectories generated by the colored SSE, we clearly identify two distinct contributions to dissipation: (i) a Lindblad-form term arising from the Wiener increment, in direct analogy with the standard Markovian SSE setting, and (ii) additional terms stemming from the statistical correlation between the system evolution and the colored noise used to model the environment. 
The same correlation terms also arise in the average dynamics induced by an Ornstein–Uhlenbeck (OU) random Hamiltonian. 
In general, these terms imply that the reduced evolution cannot be cast into a closed master equation; nevertheless, it remains positive by construction, since it is obtained by averaging over pure-state trajectories. 
We therefore introduce an approximate closure for these open correlation terms, based on a perturbative treatment of the system–environment interaction, as commonly done in Redfield theory.\cite{Redfield1957OnProcesses, Davies1974MarkovianEquations,Vacchini2024OpenTheory} 
This leads to the first main result of this work: a microscopically-inspired model for the correlation contributions that appear in the SSE formulation. 
The resulting link between the Redfield-like derivation and the colored SSE is particularly useful for interpreting the dynamics obtained from direct numerical simulations of the latter. 
Indeed, as a second main result, we show how the colored noise source imprints distinctive signatures on the quantum dynamics. 
In particular, for the two-level system studied here, we observe the emergence of multiple dissipation time scales and the formation of a robust coherent steady state, depending on the noise correlation time and on the symmetry properties of the system. 
Importantly, these features are rationalized within the Redfield-based closure of the open correlation terms, thereby building a transparent cause–effect intuition connecting the noise statistics to non-Markovian dynamics of the open quantum system. 
We highlight that the use of methods based on stochastic trajectories is especially relevant for quantum simulation with digital quantum computers.\cite{Lloyd1996UniversalSimulators, Hu2020ADevices,Miessen2022QuantumDynamics,Gallina2024SimulatingAlgorithm} 
To overcome the fact that quantum gates are unitary operations, one can exploit stochastic averages to implement intrinsically non-unitary dynamics of the mean density matrix.
In particular, we discuss how using constraints such as the hermiticity of the operators associated with the stochastic potential, each trajectory can be implemented as a unitary evolution of the system and thus be mapped into a quantum circuit.
Moreover, with digital quantum computers, the parallelization of the trajectories is accounted for by the intrinsic need for many runs of quantum circuits,\cite{gallina2022,Gallina2024FromDynamics}
which would be \textit{de facto} also required by other approaches based, e.g., on the dilation of the Hilbert space.\cite{Hu2020ADevices,Sweke2015UniversalSystems,Sweke2016DigitalDynamics,Schlimgen2022QuantumOperators}

The manuscript is structured as follows: in the following section, we briefly introduce the SSE method to open quantum systems, the main stochastic approach used in this work. 
We discuss the general form of the stochastic differential \schr equation without imposing the typical white noise form commonly encountered in applicative studies. 
In \Cref{sec:ModelEnvironment} we discuss the different stochastic processes and noises we will use to describe dissipation, their properties, and the rationale of the possible choices.
In \Cref{sec:QuantumDynamics} we derive the effective stochastic equations describing the evolution of the system, and we obtain the relative QME.
In \Cref{sec:RED} a time-dependent Redfield dissipator is
analytically derived in order to have a closure model for the colored noise-derived QME.
The investigation of the Redfield relaxation channels for a two-level system is presented.
The numerical results of the average evolutions and the main features of the trajectories are shown and discussed for different system and environment settings in \Cref{sec:NumericalResults}, as well as through the approximated Redfield model. 
We then summarize the main points and comment on their relevance in the concluding section.


\section{Stochastic \schr equation \label{sec:StochAppr}}

A simple and general way to formulate the SSE is with a linear homogeneous stochastic differential equation (SDE)
    \begin{equation}
    \label{eq:genericSSE}
        \diff{\psi}_t = A{\psi}_t\diff t + B{\psi}_t\diff X_t,
    \end{equation}
where the operator $A$ contains the deterministic (Hamiltonian)
evolution of the system and $B$ is the operator setting the structure of the stochastic fluctuations encoded by the \ito's differential $\diff X_t$ of a stochastic process $X_t$.
The SSE does not, in principle, ensure the preservation of the norm of the state vector during a trajectory. 
For linear SSEs, the normalization of the mean state is obtained 
ensuring the martingale property
$\|\psi(0)\|^2=\|\psi(t)\|^2=1$.
This gives the normalization condition
\begin{equation}
    \mathbb{E}\left[||\psi_t||^2\right]=1 \implies \mathbb{E}\left[\diff\big(\psi_t\daggah\psi_t\big)\right]=0
\end{equation}
which, without specifying yet the process $(X_t)_{t\geq0}$, translates to
\begin{align} 
\label{eq:conditionAB}
\begin{split}
    \bra{\psi_t}A\daggah+A\ket{\psi_t}\diff t +
    \bra{\psi_t}B\daggah+B\ket{\psi_t}\diff X_t +\\
    +\bra{\psi_t}B\daggah B\ket{\psi_t}(\diff X_t)^2 + \mathcal{O}(\diff t \diff X_t) = 0,
\end{split}
\end{align}
where the second-order term w.r.t.\ $\diff X_t$
depends on the specific process $(X_t)_{t\geq0}$ chosen. 
Equation \ref{eq:conditionAB} imposes constraints on the forms of the operators $A$ and $B$.

From the normalized states, we can then write the average density matrix, as
\begin{equation}
\label{eq:rho_as_average}
    \rho_t = \mathbb{E}\left[\ketbra{\psi_t}{\psi_t}\right],
\end{equation}
and the associated QME as
\begin{equation} \label{eq:general_associated_QME}
\begin{split}
    \diff \rho_t = &\ A \rho_t \diff t + \rho_t A^\dagger \diff t + B \mathbb{E}\left[\ketbra{\psi_t}{\psi_t} \diff X_t\right] + \\
    &+\mathbb{E}\left[\ketbra{\psi_t}{\psi_t} \diff X_t\right] B^\dagger \\
    &+ B \mathbb{E}\left[\ketbra{\psi_t}{\psi_t} (\diff X_t)^2\right] B^\dagger.
\end{split}
\end{equation}

A particular case of stochastic approaches to open systems is the use of stochastic processes as fluctuations of the Hamiltonian, instead of the use of a stochastic differential as a source. \cite{DeChecchi2026OnHamiltonians} 
The Hamiltonian becomes random and time-dependent and can be interpreted as an effective, stochastic Hamiltonian.
This approach can be considered a subclass of the SSE framework, where the coefficient of the \ito differential $\diff X_t$ in \cref{eq:genericSSE} is set to zero, and the stochasticity is due to the stochastic $A_t$ operator, defined as
\begin{equation}
    iA_t = 
    H^\mathrm{eff}_t = H + H_t^\mathrm{fluct} = 
    H + R Z_t,
\end{equation}
where $(Z_t)_{t\geq0}$ is a real-valued stochastic process, and $R$ is a Hermitian operator that preserves the Hamiltonian structure and ensures the preservation of the norm. 
Using a continuous process (i.e., the integrated version of an \ito differential) keeps the rules of ordinary calculus valid; therefore, the evolution of a single copy of the system follows the dynamics of an ordinary \schr equation
\begin{equation}
\label{eq:stoch_hamilt_SE}
    \frac{\diff \psi_t}{\diff{t}} = -i\left(H + R Z_t \right)\psi_t,
\end{equation}
and the dynamics of the average density matrix is obtained by \cref{eq:rho_as_average}. 

Notice that modeling the environment with the stochastic process $(Z_t)_{t\geq0}$ or with its underlying differential $\diff{Z_t}$ will lead to very different results, as we are dealing with different stochastic drives. Their distinctive effects on the average dynamics of the density matrix will be discussed in the following sections.  


\section{Stochastic modeling of the environment \label{sec:ModelEnvironment}}

Aiming to characterize the environment stochastically, it is important to describe the mean, variance, and covariance of the stochastic processes and noises driving the SSE. 
Many examples of different phenomenological baths can be found in the literature.\cite{Haken1973AnMotion,Caldeira1983PathMotion,Gardiner2000QuantumOptics,petruccione2002,Weiss2012QuantumEdition}
The use of $\delta$-correlated noise, i.e., white noise, is one of the most frequent, as it can be used to unravel Markovian QMEs in Lindblad form, as we shall recall.
On the other hand, considering a colored noise is one straightforward way to introduce memory effects into the open system's dynamics, computationally more efficient than other approaches that require computing memory kernels for the complete density matrix dynamics.

\subsection{White noise}

White noise $\xi_t$ is an idealization of interactions with environments with correlation times much shorter than the characteristic time of the system, which can therefore be considered memoryless.
It can be heuristically defined as the formal derivative of the Wiener process $W_t$ (Brownian motion) 
$\xi_t=\gamma\diff{W}_t/\diff{t}$.
This is a distribution, as $W_t$ is not differentiable. 
This noise has the following properties:
        (i) its mean is null, $\mathbb{E}[\xi_t]=0$,
        and
        (ii) it is $\delta$-correlated in time, $\mathbb{E}[\xi_t^{(i)}\xi_s^{(j)}]=\gamma^2\delta(t-s)$,
where 
$\gamma$ is the intensity of the noise and
$\delta(t-s)$ is the Dirac delta of the time difference. 
The definition of \textit{white} noise comes from the spectral density of the process, i.e., the Fourier transform of its autocorrelation function, which is constant (\Cref{fig:NoisesSpectralDensities}): all frequencies are contained in and equally contribute to the correlation function.

The interest, in the SSE framework, is in the increment $\diff W_t$, the \ito's differential for Brownian motion. 
This is the increment of a Gaussian and Markov process with law
$\Delta{W(t-s)}\sim \mathcal{N}(0,\sqrt{t-s})$.
In the infinitesimal limit, this can be written as $\diff W_t\sim\mathcal{N}(0,\sqrt{dt})$, 
and can then be used in \cref{eq:genericSSE} as the stochastic potential,
weighted by its intensity $\gamma$, which takes units of a square root of energy.

\subsection{Ornstein-Uhlenbeck in the stochastic Hamiltonian}

One possible generalization from white noise to correlated (colored) processes is the use of the Ornstein-Uhlenbeck (OU) process.\cite{Uhlenbeck1930OnMotion,Gillespie1996ExactIntegral}
It is often used in the stochastic Hamiltonian approach to describe finite-memory effects induced by an overdamped
environment whose relaxation time is comparable to the characteristic timescale of the system.\cite{Kubo1969AShape,Dijkstra2015CoherentVibrations,Bondarenko2020ComparisonSystems,Gallina2024FromDynamics}
The OU process is the random valued process $(X_t)_{t\geq0}$ that satisfies the following SDE:
\begin{equation}
    \diff X_t = -\theta X_t\diff t + \gamma\diff W_t, \quad\quad\quad \theta,\gamma>0,
    \label{eq:OrnsteinUhlenbeckProcess}
\end{equation}
where $\theta$ is the inverse of the correlation time of the process,
$\gamma$ is the amplitude of the random fluctuation
and $W_t$ is a Wiener process.
The explicit solution for the OU process is
\begin{equation}
    \label{eq:OrnsteinUhlenbeckProcess_Integral}
    X_t = X_0 e^{-\theta t} + \gamma\int_0^t e^{-\theta|t-s|}\diff{W_s}.
\end{equation}
Then, $(X_t)_{t\geq0}$ is a Gaussian-Markov process, endowed with the following statistical properties:

\begin{align}
    \mathbb{E}(X_t) &= X_0 e^{-\theta t} \xrightarrow{t\to\infty} 0,\\
    \text{var}(X_t) &= \frac{\gamma^2}{2\theta}\left( 1 - e^{-2\theta t} \right) \xrightarrow{t\to\infty} \frac{\gamma^2}{2\theta},
\end{align}
\begin{align}
\label{eq:cov_X}
    \begin{split}
    \text{cov}(X_t, X_{s=t-\Delta t}) = \frac{\gamma^2}{2\theta}&\left( e^{-\theta |\Delta t|} -e^{-\theta (2t+\Delta t)} \right)\\ 
    &\xrightarrow{t\to\infty} \frac{\gamma^2}{2\theta}e^{-\theta |\Delta t|},
    \end{split}
\end{align}
where the $t \to \infty$ limit represents the stationary process, i.e., when the environment is at its equilibrium. 
Therefore, under this hypothesis for the environment, it is natural to define $X_0$ as a random variable itself, in particular a Gaussian with zero mean and the same variance as the stationary process.  
Again, the correlation function of the OU process describes the correlation of the bath in which the open system is embedded.
This correlation function is related, \textit{via} Fourier transformation, 
to a Lorentzian spectral density (\Cref{fig:NoisesSpectralDensities}) of the form
\begin{equation}
    \label{eq:SD_OU-Lorentzian}
    J_X(\omega) = \int_{-\infty}^{+\infty} \text{cov}(X_t, X_0) e^{i\omega t} \diff t = \frac{\gamma^2}{\omega^2+\theta^2} .
\end{equation}

\subsection{Ornstein-Uhlenbeck in the stochastic \schr equation}

The OU process, as described so far, can be used in the stochastic Hamiltonian approach, in \cref{eq:stoch_hamilt_SE}.
However, we note that its SDE, \cref{eq:OrnsteinUhlenbeckProcess}, can also be used as the stochastic potential to drive our SSE, in \cref{eq:genericSSE}.\cite{DeChecchi2026OnHamiltonians} 
In this way, we have to deal with a new noise, which is the derivative of the OU process, 
in analogy with what was done previously for white noise and the Wiener process.

We can give a formal definition of this noise as the time derivative of $(X_t)$,
in the distribution sense,
dividing by $\diff{t}$ and expressing the integral form of $(X_t)$:
\begin{equation}
    \label{eq:Yprocess_noise}
     \Upsilon_t = \dot{X}_t 
         =  -\theta X_0 e^{-\theta t} - \theta\gamma\int_0^te^{-\theta (t-s)}\diff W_s + \gamma\xi_t. 
\end{equation} 

The mean of $\Upsilon_t$ is again zero, as the stochastic terms average to zero, and we consider the noise to be constructed over the stationary OU process. We have
\begin{equation}
    \mathbb{E}[\Upsilon_t] = -\theta e^{-\theta t} \mathbb{E}[ X_0]=0   
\end{equation}
for  $X_0 \sim \mathcal{N}(0, \text{var}(X_\infty)),\; t\to\infty$,
and the covariance of the noise (see \Cref{app:CovY_computing} in the supplementary material) is
\begin{equation}
\label{eq:cov_Y}
    \text{cov}(\Upsilon_t,\Upsilon_s) =  -\frac{\gamma^2 \theta}{2} e^{-\theta |t-s|} + \gamma^2\delta(t-s).
\end{equation}

It is interesting to note that the correlation function includes both the characteristic $\delta$-function of the white noise and the exponential decay of the OU process.
The spectral density of this noise is then a constant minus the rescaled Lorentzian
\begin{equation}
    \label{eq:SD_OU-NOISE}
    J_\Upsilon(\omega) = \int_{-\infty}^{+\infty} \text{cov}(\Upsilon_t) e^{i\omega t} \diff t 
    = \gamma^2 \frac{\omega^2}{\omega^2+\theta^2} ,
\end{equation}
which converges asymptotically to 1 at infinity and collapses to 0 at low frequencies (\Cref{fig:NoisesSpectralDensities}).
This particular form of the spectral density has consequences on the system dynamics, as we will discuss in the following sections, both analytically and with numerical implementations.

\begin{figure}
    \centering
    \includegraphics[width=0.8\linewidth]{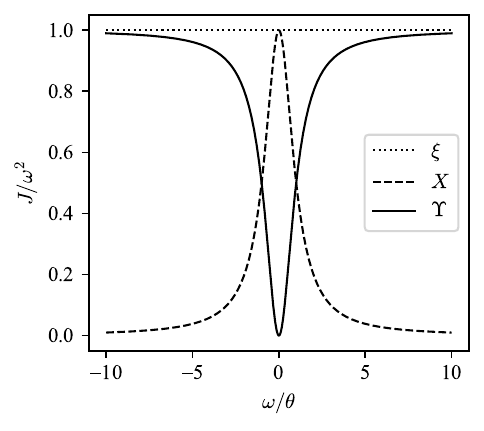}
    \caption{Spectral densities of white noise ($\xi$, dotted line), OU process ($X$, dashed line) and $\Upsilon_\mathrm{OU}$-noise ($\Upsilon$, solid line).
    }
    \label{fig:NoisesSpectralDensities}
\end{figure}

\section{Quantum dynamics \label{sec:QuantumDynamics} }

\subsection{Lindblad-form master equations from white noise \label{sec:LindbladfromWhite}}

We now consider the generic SSE in \cref{eq:genericSSE} subject to white noise, which reads
\begin{equation}
    \diff\psi_t = A\psi_t\diff t + \gamma B\psi_t\diff W_t.
\end{equation}
This form of the SSE needs to be normalized according to \cref{eq:conditionAB}.
The noise components average to zero due to the properties of $\diff W_t$,
and, to preserve the martingale property, we have to set equal to zero the sum of the operators that multiply $\diff t$, arriving at
\begin{equation}
\label{eq:normal_A}
    A\daggah + A + \gamma^2 B\daggah B = 0.
\end{equation}

This leads to the identification of the operator of the deterministic part as the sum of the Hamiltonian term and a normalization term, namely 
\begin{equation}
\label{eq:Aoperatorwithcorrection}
    A = \frac{i}{\hbar} H -\frac{1}{2}\gamma^2 B\daggah B,    
\end{equation}
where $(B\daggah B)$ is always Hermitian. To keep consistent units, we set $B=L/\sqrt{\hbar}$.
Then, the normalized SSE becomes
\begin{equation}
\label{eq:normalized_white_sse}
    \diff\psi_t = \left(-\frac{i}{\hbar}H -\frac{1}{2\hbar}\gamma^2 L\daggah L \right)\psi_t \diff t 
    + \frac{\gamma}{\sqrt{\hbar}} L\psi_t\diff W_t,
\end{equation}
with the associated QME (see eq. \ref{eq:general_associated_QME}) given by
\begin{equation}
\label{eq:lindblad}
    \difft \rho_t = 
    -\frac{i}{\hbar}\left[H,\rho_t\right] 
    + \frac{\gamma^2}{\hbar} \left( L\daggah\rho_t L - \frac{1}{2}\left\{L\daggah L, \rho_t\right\} \right),
\end{equation}
where $\{\cdot,\cdot\}$ is the anti-commutator. 
In this QME, we recognize a Lindblad form with operator $L$ as the jump operator
and $\gamma^2/\hbar=\Gamma/{\hbar}$
as the dissipation rate. 
The approach is easily extended to multiple dissipation channels $(L_k,\Gamma_k)$ by introducing multidimensional noises and a vector of intensities $\gamma_k$, associated with each operator $L_k$.
Since we can freely choose the $L_k$ operators,
we can recover any Lindblad-form QME from an SSE.\cite{Biele2012ASystems,Donvil2022QuantumEquations} 
For instance, the absence of constraints on $L_k$ allows us to consider non-Hermitian operators, leading to stationary probability distributions that are not uniformly distributed on all Hamiltonian eigenstates.\cite{DeChecchi2026OnHamiltonians}

\subsection{Correlated QME from Ornstein-Uhlenbeck-driven SSE \label{sec:OU_SSE}}

To introduce memory effects in the dynamics, we can use a different noise source.
Starting from the general non-normalized SSE, \cref{eq:genericSSE},
we substitute the Ornstein-Uhlenbeck SDE, \cref{eq:OrnsteinUhlenbeckProcess},
as the noise source.
Then, the non-normalized SSE is now
\begin{equation}
    \diff\psi_t = (A-\theta X_t B)\psi_t\diff t + \gamma B\psi\diff W_t.
    \label{eq:nonNorm-OU-SSE}
\end{equation}

In the normalization, the martingale condition is satisfied when
\begin{equation}
\label{eq:normalization_OUSSE}
    \langle\psi| \left[ A\daggah+A -\theta X_t (B\daggah+B) + \gamma^2B\daggah B \right]| \psi\rangle dt = 0.
\end{equation}

Compared to the white noise case, to keep operators $A$ and $B$ deterministic and time-independent, 
we need to impose further constraints on the $B$ operator, which must be an anti-Hermitian operator of the form 
$B = -iR/\sqrt{\hbar}$ with  $R\daggah=R$ (see \Cref{app:proofAntiHermit} in the supplementary material for more details). 
Substituting $B$ in \cref{eq:normalization_OUSSE},
we obtain the same normalization term in $A$ as in the white-noise case, \cref{eq:Aoperatorwithcorrection}, since the sum of anti-Hermitian operators cancels out. 
The linear normalized OU-driven SSE is then
\begin{equation}
    \label{eq:OUSSE_linear}
    \begin{split}
     \diff\psi_t = &\left( -\frac{i}{\hbar}H 
     -\frac{1}{2\hbar}\gamma^2 R\daggah R \right)\psi_t \diff t\\
     &+ \frac{i}{\sqrt{\hbar}}\theta X_t R \psi_t  \diff t\\
     &-  \frac{i}{\sqrt{\hbar}}\gamma R\psi_t\diff W_t,
    \end{split}
\end{equation}
as obtained, for instance, in ref.~[\onlinecite{Barchielli2010StochasticNoise}]. 
In this, we clearly see the addition of a term proportional to the colored noise process $X_t$.
Taking the derivative of the outer product of the wavefunction, we first obtain the QSME associated with a single trajectory (denoted as $\rho^\mathrm{trj}_t$) as
\begin{equation}
    \label{eq:QSME-OU-purestate}
\begin{split}
    \diff(\rho_t^\mathrm{trj}) = 
    &-\frac{i}{\hbar}[H,\rho_t^\mathrm{trj}]\diff t
    +\frac{i}{\sqrt{\hbar}}[\theta X_tR,\rho_t^\mathrm{trj}]\diff t\\
    &-\frac{\gamma^2}{2\hbar}[{R},{[{R},{\rho_t^\mathrm{trj}}]}]\diff t
    -\frac{i}{\sqrt{\hbar}}\gamma[{R},{\rho_t^\mathrm{trj}}]\diff W_t.
\end{split}
\end{equation}

By averaging over the trajectories, the mean density matrix dynamics is governed by
\begin{equation}
    \label{eq:nM-QME_OUdriven}
    \begin{split}
    \difft \rho_t = &-\frac{i}{\hbar}[{H},{\rho_t}]
    + \frac{\gamma^2}{\hbar} \left( R\rho_t R\daggah - \frac{1}{2}\{R\daggah R,\rho_t\} \right)\\
    &+\frac{i\theta}{\sqrt{\hbar}}[{R},{\mathbb{E}(X_t \rho^\mathrm{trj}_t)}].
    \end{split}
\end{equation}
The right-hand side presents itself with three distinct terms.
The first one accounts for the contribution of the system Hamiltonian. 
The second one, proportional to $\gamma^2$, is a Lindblad-type dissipator.
The last term, proportional to $\theta$, features the correlation ${\mathbb{E}(X_t \rho^\mathrm{trj}_t)}$ between the system evolution and the colored noise.

Due to this cross-correlation between the single realization of the OU process and each trajectory, 
the QME in \cref{eq:nM-QME_OUdriven}
is not in a closed form in terms of the mean density matrix $\rho_t$, and we have to rely on stochastic unraveling for its numerical solution. 
This affects the long-time dynamics in a way that is not immediately evident or easily predictable.
In \cref{sec:RED} below, we derive an approximate closure model that helps the understanding of the role of the correlation term.

A striking example of the importance of the correlation term in \cref{eq:nM-QME_OUdriven} can be seen by considering its stationary state.  Because the operators $R$ are constrained to be Hermitian when introducing colored noise (see \Cref{app:proofAntiHermit} in the supplementary material), by considering only the Lindblad dissipator we would expect to reach a distribution with equally populated eigenstates.
We will show in the following that this prediction is not always true when the open cross-correlation term is present.


\subsubsection{Example: element-wise EOM from the correlated QME \label{sec:EOMs}}

To understand the direct consequences of substituting white noise with a colored noise source in the SSE, let us look at the simplest case of a two-dimensional Hilbert space spanned by the basis vectors $\{\ket{0},\ket{1}\}$.
We consider two resonant states and set to zero the deterministic part of the interaction Hamiltonian.
We choose $R=\sigma_x$ ($\sigma_x$, $\sigma_y$, $\sigma_z$ are Pauli matrices)
to obtain a correlated version of the Pauli master equation with symmetric decay rates.
The resulting QME reads
\begin{equation}
\label{eq:pauli-like_nM-QME}
    \difft \rho_t = \frac{\Gamma}{\hbar} \left( \sigma_x \rho_t \sigma_x - \rho_t \right) 
    + \frac{i\theta}{\sqrt{\hbar}}[{\sigma_x},{\mathbb{E}(X_t \rho^\mathrm{trj}_t)}],
\end{equation}
where $\Gamma=\gamma^2$ would be the relaxation rate of the channel in the absence of memory effects. 

The variables of the system are the four elements of the density matrix. 
By normalization, the populations are such that 
$\rho_{00}+\rho_{11}=1$ and by hermiticity the coherences satisfy $\rho_{01}=\rho_{10}^*$.
So we can describe the system simply by computing the time derivatives of two variables $\rho_{00}$ and $\rho_{10}$.
The dynamics of the population is
\begin{equation}
\label{eq:2sites_population_EOM_nMQME}
    \dot{\rho}_{00} = \frac{\Gamma}{\hbar} ( 1 - 2\rho_{00}) 
                    - \frac{2\theta}{\sqrt{\hbar}} \mathbb{E}\left[ X_t \Im( \rho^\mathrm{trj}_{10} ) \right]
\end{equation}
while the coherence evolves with
\begin{equation}
\label{eq:2sites_coherences_EOM_nMQME}
    \dot{\rho}_{10} = -\frac{i2\Gamma}{\hbar} \Im(\rho_{10}) 
                        + \frac{i2\theta}{\sqrt{\hbar}} \mathbb{E}\left[  X_t \rho^\mathrm{trj}_{00} \right] 
                        - i\theta\mathbb{E}(X_t)
\end{equation}
where we know the mean of the OU process,
$\mathbb{E}(X_t)= X_0 e^{-\theta|t-t_0|}$, 
conditioned on the initial value $X_{t_0}$.

In this simple case, the effect of the cross-correlation term
is to  
mix \textit{trajectory-wise}  the equations of motion of populations and coherences. 
In particular, populations are affected by the correlation of coherences with the OU process.
This is an effective correlation of the system coherences with the environment. 
These correlations are usually neglected in the microscopic derivation of quantum master equations (see \Cref{sec:RED}) 
when assuming the factorized form of the total density matrix in terms of system and bath density matrices at all times (Born approximation). 
The last term in \cref{eq:pauli-like_nM-QME} is also responsible for the possible deviation of the asymptotic value of populations from equipartition and for non-vanishing steady-state coherences.  

\subsection{Stochastic Hamiltonian with Ornstein-Uhlenbeck fluctuations}

Within the framework of effective stochastic Hamiltonian,\cite{DeChecchi2026OnHamiltonians} 
the Ornstein-Uhlenbeck process $(X_t)_{t\geq0}$
is often used as a random potential (not its noisy increments).
As mentioned in \cref{sec:ModelEnvironment}, this describes the interaction with an environment of overdamped oscillators.
The effective Hamiltonian can be written as
\begin{equation}
    H_\mathrm{eff} = H + \sqrt{\hbar} R \theta X_t ,
\end{equation}
where the parameter $\theta$ is necessary to keep consistent units, and the dynamics of the system becomes
\begin{equation}
    \label{eq:StochHamilt_notSDE_X}
    \diff \psi_t = -\frac{i}{\hbar}\left(H + \theta\sqrt{\hbar} R X_t \right)\psi_t \diff t.
\end{equation}

The average Liouville equation associated with this dynamics reads
\begin{equation}
\label{eq:qme_Xou}
    \difft \rho_t =  -\frac{i}{\hbar} [H,\rho_t] -\frac{i\theta}{\sqrt{\hbar}} [R,\mathbb{E}(X_t \rho^\mathrm{trj}_t) ].
\end{equation}
As can be noted, here, we do not find any Lindblad-form dissipator. The only term in addition to the Hamiltonian contribution is the correlation term already observed in \cref{eq:nM-QME_OUdriven}, albeit with opposite sign.

We can now identify a correspondence between components of the environment correlation function and terms of the dissipator in the mean dynamics.
The ensemble average of the SSE with a white-noise potential, which is strictly $\delta$-correlated, yields Lindblad-type dissipators, \cref{eq:lindblad}. In contrast, employing the exponentially correlated $X_\mathrm{OU}$ process results in a dissipator containing a single term that explicitly reflects the system–environment correlation, \cref{eq:qme_Xou}.
Finally, the QME obtained by averaging the SSE driven by the colored noise $\Upsilon_\mathrm{OU}$, whose correlation function includes both delta and exponential correlations, \cref{eq:cov_Y}, features a Lindblad dissipator together with an additional term that captures environment memory effects, \cref{eq:nM-QME_OUdriven}.

\section{Modeling with the Redfield microscopic approach \label{sec:RED}}

In this section, we explicitly derive a closed model master equation for the systems under investigation, interacting with colored environments.
To do so, we elaborate on the relation between the stochastic approach and the microscopic derivation of the quantum master equation in the form developed by Redfield.\cite{Redfield1957OnProcesses, Davies1974MarkovianEquations,Vacchini2024OpenTheory}

Let us start by summarizing the main points of the Redfield derivation to identify explicitly the underlying assumptions and approximations (see, for instance, Refs. [\onlinecite{petruccione2002,Rivas2011OpenIntroduction,Vacchini2024OpenTheory}]).
The starting point is the microscopic definition of the global Hamiltonian in terms of the system, bath, and interaction  Hamiltonians as
\begin{equation}
    H = H_S\Tensor\Id_B + \Id_S\Tensor H_B + H_I,
    \label{eq:totalHamiltonian_general}
\end{equation}
where $H_S,H_B$ describe the two isolated subsystems and the interaction Hamiltonian $H_I\in\hilbert$ is defined as
\begin{equation}
    H_I = \sum_\alpha S_\alpha\Tensor B_\alpha.
    \label{eq:interactionHamiltonianFacorization}
\end{equation}

In the interaction picture,
indicated by a tilde over the operators,
we focus only on the evolution due to the interaction Hamiltonian, now time-dependent.
We derive the system evolution by writing the Liouville equation, integrating it, and inserting the integrated density matrix into the initial equation. Then, we trace out the bath from the resulting integro-differential equation, a recasting that preserves the exact dynamic description.
Then, approximations are needed to obtain a closed and solvable equation.
    First, we assume that the initial density matrix is factorized, and the bath is considered at thermal equilibrium,
        $\rho(0) = \rho_S(0)\Tensor\rho_B^{eq}$.
    Then, considering  
    a perturbative interaction of the environment and the system,  we extend the factorization at all times $t$ by setting
        \begin{equation}
        \label{eq:factorizeddensitymatrix}
        \rho(t) \approx \rho_S(t)\Tensor\rho_B^{eq}    
        \end{equation}
    and neglecting higher-order terms in the system-bath correlation, which is known as the Born approximation.

Using \cref{eq:interactionHamiltonianFacorization} and \cref{eq:factorizeddensitymatrix}, we obtain a factorization of the system and environment degrees of freedom, where the dynamics of the environment enters through its two-time correlation function:
\begin{equation}
    C_{\alpha\beta}(t,t') = \mathrm{Tr}\left[B_\alpha(t)B_\beta(t')\rho_B^{eq} \right]. 
    \label{eq:corr_bath}
\end{equation}

The bath correlation function allows us to bridge the microscopic derivation with the stochastic approach, as the relevant bath dynamics is captured by the two-time correlation function in both cases.

Using the cyclic property of the trace and the definition \cref{eq:corr_bath}, we can now write
\begin{equation}
    \begin{split}
    \difft \tilde{\rho}_S(t) = 
    - \frac{1}{\hbar^2}
    &\sum_{\alpha\beta}
                \int_0^t   
                \Big(
                C_{\alpha\beta}(t,t')[\tilde{S}_\alpha(t), \tilde{S}_\beta(t')\tilde{\rho}_S(t')]\\
               &+ C_{\alpha\beta}(t',t)[\tilde{\rho}_S(t')\tilde{S}_\beta(t'), \tilde{S}_\alpha(t)] \Big)  \diff t'
    \end{split}
    \label{eq:nonmarkovQMEgeneral}
\end{equation}
which is a non-Markovian QME, as it depends on $\rho_S$ at previous times.

Let us now consider the simple two-level system studied above, driven by the $\Upsilon_\mathrm{OU}$ noise.
We consider only one bath channel, so we can drop the sum over the bath operators
and identify the operators acting on the system by setting $S=\sqrt{\hbar}R$.
From now on, for the sake of clarity, we denote the system density matrix as $\rho$ instead of $\rho_S$. 
By construction, the density matrix of the bath commutes with its Hamiltonian, hence the correlation function depends only on the time difference $(t-t')$,
and it corresponds to what we computed in \cref{eq:cov_Y}.
We then identify $C_{\alpha\beta}(t,t')=C(\Upsilon_t,\Upsilon_{t'})=C(t-t')$.
Moreover, since the stochastic process is classical, we can recognize that the additional symmetry $C(t-t')=C(t'-t)$ holds. 
In this setting, 
\cref{eq:nonmarkovQMEgeneral} takes the form
\begin{equation}
\label{eq:nonmarkovQME_SysSetting}
    \difft \tilde{\rho}(t) = 
                -\frac{1}{\hbar} \int_0^t   
                C(t-t')
                \left([\tilde{R}(t),
                \tilde{R}(t')\tilde{\rho}(t')]
                + \text{h.c.}\right) \diff t'.
\end{equation}

To obtain an evolution equation which is local in time, meaning that it depends on the state of the system at time $t$, a further approximation is needed. The physical assumption is that the evolution of the state is slower than the evolution of the environment, so that 
$\rho(t')\approx\rho(t)$.

We can then change the integration variable to the time difference $\tau = t-t'$, thus obtaining the \textit{Redfield equation with time-dependent coefficients}
\begin{equation}
\label{eq:Redfield_TimeDepentCoeff_QME_SysSetting}
    \difft \tilde{\rho}(t) = 
                -\frac{1}{\hbar} \int_0^t   
                C(\tau)
                \left(
                [\tilde{R}(t),
                \tilde{R}(t-\tau)\tilde{\rho}(t)]
                + \text{h.c.}
                \right) \diff \tau.
\end{equation}
This is a time-local QME, yet with a component still depending on the dynamical history of the bath. 
To recover what is typically intended as the Redfield equation, the upper limit of the time integral should be extended to infinity, a full Markovian approximation that we do not invoke here.

We can now express the operator $R$ on $\{\ket{a},\ket{b}\}$ eigenbasis of $H_S$ 
and make explicit the time dependence of the coupling operators. The action of $e^{ iH_S\tau}$ in this basis is easily evaluated. For instance, we have
\begin{equation}
    \label{}
    \begin{split}
    \ketbra{a}{a}\tilde{R}(t)\ketbra{b}{b}
    &=\ketbra{a}{a} e^{\frac{i}{\hbar} H_S t} R e^{-\frac{i}{\hbar}H_S t} \ketbra{b}{b}\\
    &= e^{i (\omega_a-\omega_b) t} \ketbra{a}{a} R \ketbra{b}{b},
    \end{split}
\end{equation}
where ${E_a}/{\hbar}=\omega_a$ and $\omega_{ba}=\omega_b-\omega_a$. 
This structure suggests the decomposition of the coupling operator according to the energy gaps of the system, that is
\begin{subequations}
\begin{align}
    \label{eq:RinEigenB}
    \tilde{R}(t) &=\sum_{
    \omega=0,\omega_{ab},\omega_{ba}
    } R(\omega) e^{-i\omega t}\\
        &= 2R(0) 
        + e^{-i\omega_{ba} t} R(\omega_{ba}) 
        + e^{-i\omega_{ab} t} R(\omega_{ab}),
\end{align}
\end{subequations}
where the tilde indicates the operators in the interaction picture, its absence the \schr picture, and the dependence of $R(\omega)$ indicates that the operator is that of the $\omega$ energy gap of the system.  
With these definitions,  \cref{eq:Redfield_TimeDepentCoeff_QME_SysSetting} can be rewritten as
\begin{equation}
    \label{eq:RedfieldFrequenx}
    \begin{split}
    \difft \tilde{\rho}(t) = 
    -\frac{1}{\hbar} \sum_{\omega,\omega'} 
    \bigg(
    &\int_0^t 
    C(\tau) e^{i\omega'\tau} \diff\tau\bigg) e^{i(\omega-\omega')t}\\
    &\times
    \left(\left[ R\daggah(\omega) , R(\omega')\tilde{\rho}(t)  \right] +\text{h.c.} \right) ,
    \end{split}
\end{equation}
where the integral defines a half-sided finite-time Fourier transformation.
Upon explicit integration, 
we obtain time- and frequency-dependent coefficients
\begin{equation}
    \label{eq:TimeFreqCoeff}
    \Gamma(\omega',t)
    = \int_0^t C(\tau) e^{i\omega'\tau} \diff\tau
    =\frac{\gamma^2}{2}\left(1-\theta\frac{e^{(i\omega'-\theta)t}-1}{i\omega'-\theta}\right),
\end{equation}
using the analytical form for the covariance of the $\Upsilon_\mathrm{OU}$ noise.
Rotating back to the \schr picture, we finally obtain 
\begin{align}
\label{eq:RedfieldFreqFINAL}
    \difft \rho(t) = 
    &-\frac{i}{\hbar}[H_S,\rho(t)] \\
    &-\frac{1}{\hbar} \sum_{\omega,\omega'}\Gamma(\omega',t)\left(\left[ R\daggah(\omega) , R(\omega')\rho(t)  \right] +\text{h.c.} \right).\nonumber
\end{align}

\subsection{Closure model for colored environments \label{sec:ClosureModel}}

We are now in the position of formulating an approximation of the cross-correlation term arising from the colored SSE unraveling.
Comparing the two QMEs, 
from the SSE \cref{eq:nM-QME_OUdriven} and the Redfield \cref{eq:RedfieldFreqFINAL}, the following relation between the dissipators emerges
\begin{subequations}
    \begin{align} 
    \label{eq:dissipators_relation_ssered}
    &\empty \frac{\gamma^2}{\hbar} \left( R\rho(t) R\daggah - \frac{1}{2}\left\{R\daggah R,\rho(t)\right\} \right)
    +\frac{i\theta}{\sqrt{\hbar}} [{R},{\mathbb{E}(X \rho^\mathrm{trj}(t))}]\approx\\
    &\approx -\frac{1}{\hbar}
    \sum_{\omega,\omega'}\Gamma(\omega',t)\left(\left[ R\daggah(\omega) , R(\omega')\rho(t)  \right] +\text{h.c.} \right).\label{eq:toapprox}
    \end{align}
\end{subequations}

By inserting the explicit form of the coefficients from \cref{eq:TimeFreqCoeff}, \cref{eq:toapprox} becomes
\begin{widetext}
\begin{subequations}
    \begin{align} 
    & \sum_{\omega,\omega'}
    \frac{\gamma^2}{2\hbar}\left(1-\theta\frac{e^{(i\omega'-\theta)t}-1}{i\omega'-\theta}\right)
    \left( 2 R(\omega)\rho(t) R(\omega')\daggah - 
    \{R(\omega)\daggah R(\omega'),\rho(t)\} 
    \right)=\\
    &= \frac{\gamma^2}{2\hbar}\sum_{\omega,\omega'}
    \Big( 2 R(\omega)\rho(t) R(\omega')\daggah - 
    \{R(\omega)\daggah R(\omega'),\rho(t)\} 
    \Big) \label{eq:firstterm}\\
    &\quad\quad+\frac{\gamma^2}{2\hbar}
    \sum_{\omega,\omega'}\theta\frac{1-e^{(i\omega'-\theta)t}}{i\omega'-\theta}
    \Big( 2 R(\omega)\rho(t) R(\omega')\daggah - 
    \{R(\omega)\daggah R(\omega'),\rho(t)\} 
    \Big).
    \label{eq:secondterm}
    \end{align}
\end{subequations}

The first term, \cref{eq:firstterm}, is just the Lindblad dissipator where the jump operators are expressed in the basis of Hamiltonian eigenstates.
By rotating back to the original frame, such Lindbladian terms are equivalent to those emerging from the SSE,
first term in \cref{eq:dissipators_relation_ssered}. 
Therefore, the approximate equivalence becomes
\begin{equation}
    \frac{i\theta}{\sqrt{\hbar}} [{R},{\mathbb{E}(X \rho^\mathrm{trj}(t))}]
    \approx
    \frac{\gamma^2}{2\hbar} \sum_{\omega,\omega'}\theta\frac{1-e^{(i\omega'-\theta)t}}{i\omega'-\theta}
    \Big( 2 R(\omega)\rho(t) R(\omega')\daggah - \{R(\omega)\daggah R(\omega'),\rho(t)\} 
    \Big).
\end{equation}
\end{widetext}
This gives the approximate expression of the cross-correlation term as a tensorial function, based on a rotated and operator-wise weighted generator, linear in the system density matrix $\rho(t)$,
effectively producing a closure model for the correlation term in \cref{eq:nM-QME_OUdriven}.
This term (with opposite sign) is the same appearing in the $X_\mathrm{OU}$-driven QME,  
allowing for a closure model of \cref{eq:qme_Xou} as well.

These Redfield equations provide an alternative viewpoint on the correlation terms derived in the previous sections and allow us to build an intuition on how they affect the system dynamics. 
In the following, we will discuss the explicit form of the time-dependent Redfield dissipator for the prototypical example of a two-level system.
We anticipate that the analysis provides the rationale to understand some unexpected features of the dynamics obtained by the numerical solution of the colored SSE presented in   \cref{sec:NumericalResults}.

\subsection{Redfield relaxation channels for the two level system driven by $\Upsilon_\text{OU}$-noise\label{sec:RedfieldModelEvolution}}

The microscopic derivation of the quantum master equation provides an explicit link between the rates of different relaxation processes and the environment dynamics as described by its correlation function. Notably, the available relaxation channels are determined by the structure of the coupling, i.e., by the system operators $R$ coupling with the bath, see \cref{eq:RedfieldFreqFINAL}. 
When $R$ commutes with the system Hamiltonian $H_S$, there is no energy exchange between the system and the environment, and the only irreversible channel is decoherence of the energy eigenbasis. The pure decoherence model corresponds to a coupling operator $R$ with components only for $\omega=0$ in \cref{eq:RinEigenB}.

Conversely, when the coupling operator does not commute with $H_S$, it presents components at different frequencies, triggering transitions between eigenstates of different energies with rates that are determined by the environment through $\Gamma(\omega,t)$ defined in \cref{eq:TimeFreqCoeff}.     

To illustrate the implication of this relaxation structure applied to the specific case of the frequency- and time-dependent transition rates associated with the
$\Upsilon_\mathrm{OU}$-noise, 
\cref{eq:TimeFreqCoeff}, let us consider again a two-level system with the generic Hamiltonian
\begin{equation}
\label{eq:GenericSystHamiltonian}
    H_{\varepsilon,\Omega} = -\frac{1}{2}\varepsilon\sigma_z + \Omega\sigma_x.
\end{equation}

We refer to the eigenstates of $\sigma_z$ as the observable basis, $\{\ket{0},\ket{1}\}$, for example the computational states of a qubit or two local sites in a transfer problem, such as exciton or charge transfer. In the following, we will assume the latter perspective. 
Therefore, we will denote different systems by setting the parameters $\varepsilon$ and $\Omega$ as follows:  two states that are energy-degenerate and not interacting ($H_{0,0}$), coupled degenerate levels ($H_{0,\Omega}$), a two-level system which is not interacting through Hamiltonian dynamics ($H_{\varepsilon,0}$), or the most general case of two coupled states at different energies ($H_{\varepsilon,\Omega}$). 

Evaluating the system operators and the time-dependent rates for each frequency leads to different results about active relaxation channels and consequent stationary state, depending on the system Hamiltonian and the noise operator (\Cref{fig:timeCoeffiecientsRedfield}).
Let us consider two degenerate states coupled by the Hamiltonian, $H_{0,\Omega}$, and the noise operator as the Pauli $\sigma_x$ operator.
This is a pure decoherence setting where the coupling to the noise commutes with the Hamiltonian of the system.
The frequencies of the system obtained by diagonalizing the Hamiltonian are $(-2\Omega,2\Omega)$, but the noise coupling operator, $R$, does not connect different eigenstates, i.e. $R_{k,l\neq k}=0$, 
nullifying the contribution of the coefficients $\Gamma(\omega\neq0)$.
On the other hand, the zero-frequency component of $R$ is not zero
and the contribution of the coefficient $\Gamma(\omega=0)$ is real and exponentially decaying in time (solid line in \cref{fig:timeCoeffiecientsRedfield}). 
Physically, this describes a decoherence channel which is active only at early times while the decoherence rate vanishes at longer times. 
This correctly predicts the peculiar stationary state obtained numerically from the SSE implementation (see next section), where we observe only partial decoherence of the initial state. 
The result is an asymptotic mixed state that can maintain some coherent behavior, depending on the timescale of the system and the intensity of the noise.

\begin{figure}
    \centering
    \includegraphics[width=0.85\linewidth]{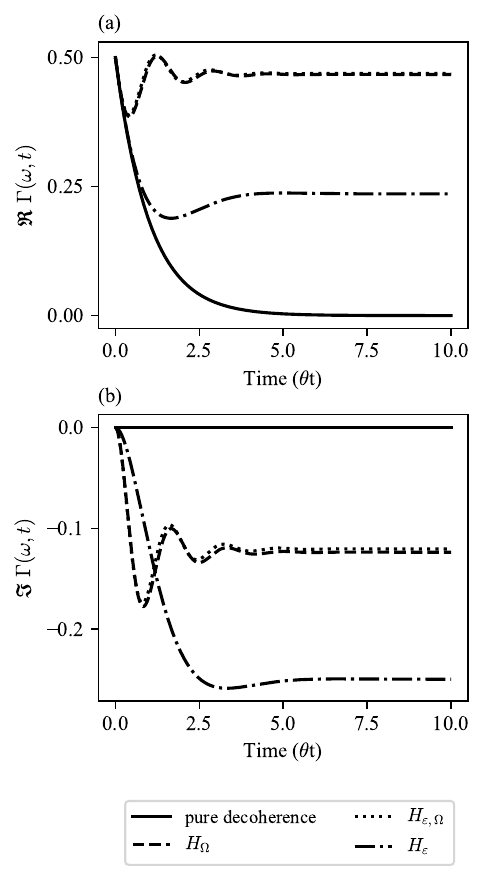}
    \caption{
    Time-dependent coefficients of the Redfield equation computed for the different Hamiltonians, with $\gamma=0.45\sqrt{\hbar\theta}$ and $\theta=1$. 
    The real part is on panel (a) and the imaginary part on panel (b).  
    Negative frequency coefficients are not shown as they are the complex conjugate of the positive ones. 
    The null frequency coefficient contribution is equal for all the Hamiltonians and labeled ``pure decoherence''.
    The characteristic frequencies of the different Hamiltonians are in different line styles as depicted in the legend.
    }
    \label{fig:timeCoeffiecientsRedfield}
\end{figure}

In the case of a non-coupled and non-degenerate system, the eigenstates of the Hamiltonian $H_{\varepsilon,0}$ correspond to the observable basis; therefore, the $\sigma_x$ coupling to the noise only induces transitions between the eigenstates.
Pure decoherence due to the zero-frequency term is null, whilst the components at non-zero frequency matter. 
These coefficients have the same value as the zero-frequency coefficient at the initial time. However, they are complex and after a few oscillations, they converge to a non-zero absolute value, lower than the initial one.
Because these coefficients weigh the upward and downward transitions equally, we should expect a stationary state reflecting the equipartition of population between the states. 

In the case of the generic Hamiltonian $H_{\varepsilon,\Omega}$, the $\sigma_x$ noise operator has both a pure dephasing component and finite frequency components inducing transitions.
Specifically, if  $U$ is the unitary transformation diagonalizing the Hamiltonian, the weight of the pure dephasing and finite frequency contributions are given by the diagonal and off-diagonal elements of the transformed coupling operator, namely
\begin{equation}
    R \mapsto U\sigma_x U\daggah
    =-\begin{pmatrix}
        \frac{\varepsilon + \sqrt{\varepsilon^2 + 4 \Omega^2}}{\Omega} & \frac{\varepsilon}{\Omega} \\
        \frac{\varepsilon}{\Omega}  & \frac{\varepsilon + \sqrt{\varepsilon^2 + 4 \Omega^2}}{\Omega}
    \end{pmatrix}.
\end{equation}

The two different main time scales for the dissipation can then be clearly seen and estimated. 
In the short time, the system decay is due to the effects of both the zero-frequency and the $2\omega'$ coefficients.
For a longer time, the decay is due only to the coefficients at non-zero frequencies. When $\varepsilon<\Omega$, the resulting relaxation can be much slower than the oscillation frequency of the system, resulting in long-lived coherent transfer between the two sites.

In the next section, we will see how the above arguments based on a Redfield-like form are realized by the dynamics resulting from the numerical solution of the SSE with colored noise. However, it is important to remark a significant difference between the two methods.
The Redfield derivation returns a model for the QME obtained by the stochastic approaches, not the same QME but
a closure \textit{model}.
Since the model QME forms are not Lindblad, we cannot guarantee \textit{a priori} the positivity of the maps. 
In contrast, the dynamics obtained by the average of stochastic trajectories are always positive by construction.

\section{Numerical results \label{sec:NumericalResults}}

In this section, we analyze the colored-noise SSE, \cref{eq:nonNorm-OU-SSE}, and its closure model through the numerical results obtained for the evolution of the two-site model 
introduced in \cref{sec:RedfieldModelEvolution}. 
We refer to the Hamiltonian in \cref{eq:GenericSystHamiltonian} and the eigenstates of $\sigma_z$ are the observable basis vectors, $\{\ket{0},\ket{1}\}$ that we interpret as different sites in a transfer problem. 
The system Hamiltonian parameters are kept constant throughout the numerical analysis,
setting $\varepsilon=\hbar\theta$ and $\Omega=2\hbar\theta$. 
The initial state of the system is set to be a pure state localized on $\ket{1}$, meaning 
$\ket{\psi_0^{(n)}}=\ket{1}$ for each trajectory $n$.

The results are obtained by averaging over $2 \cdot 10^4$ to $10^5$ trajectories realization for each dynamics, 
computed using the Euler–Maruyama method,\cite{Bally1995TheCalculus} the extension to SDE of the Euler finite difference method,
with time increments such as to ensure convergence of the noise realizations ($\theta\Delta t=10^{-5}$).
 
We will discuss specific cases, corresponding to different scenarios of how the noise affects the systems: it can either modulate the energy of the local sites (diagonal noise), or act on the coupling between the two sites. 

The effects of different stochastic sources will be compared, the 
intensity of the white noise component is set at $\gamma=0.7\sqrt{\hbar\theta}$
(the effect of changing it, from 0.04 to 3 $\sqrt{\hbar\theta}$, is reported in the supplementary material)
while the memory parameter $\theta$ is kept constant at $\theta=1$. 

The comparison will be drawn between the dynamics driven by white noise $\xi_t$, \cref{eq:lindblad}, the OU process $X_\mathrm{OU}$, \cref{eq:qme_Xou}, and the associated $\Upsilon_\mathrm{OU}$-noise, \cref{eq:nM-QME_OUdriven}.
We remark that the dynamics of systems evolving under white noise and under a stochastic Hamiltonian modulated with an OU process are shown for comparison; the main focus is on the $\Upsilon_\mathrm{OU}$-noise driven SSE.

In \cref{fig:evol_trajs}, the difference between the typical behavior of the noisy trajectories, generated by the SSE approach, compared to the smooth propagation obtained from the random Hamiltonian with the OU processes can be appreciated, and it is a hallmark of the solution of the SSE.

\begin{figure*}
    \centering
    \includegraphics[width=0.85\linewidth]{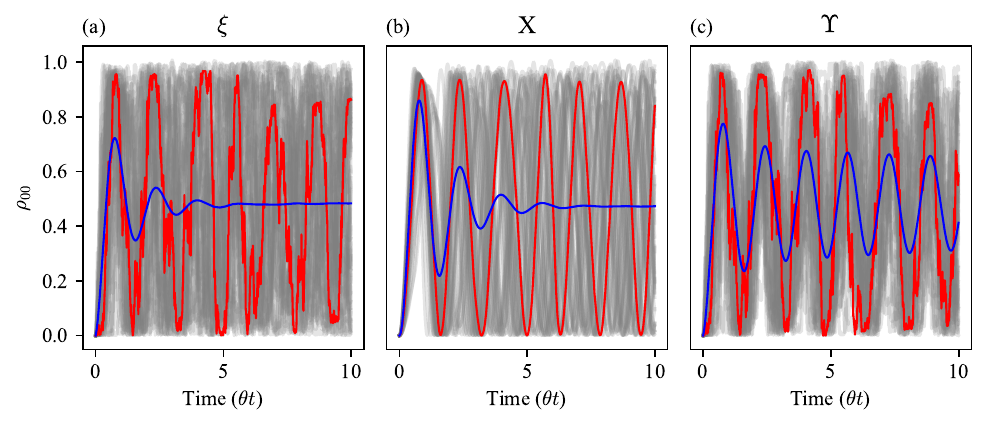}
    \caption{
    Comparison of the mean and single-trajectory evolution of the system under different driving stochastic processes.
    Swarms of 50 different stochastic trajectories (grey lines in transparency), the evolution of single trajectories (red lines), and the averaged evolution of the $\rho_{00}$ population---the observable under investigation (blue lines). 
    The stochastic force for each    unraveling: 
    in the upper left panel white noise SSE,
    central OU process fluctuations and
    right $\Upsilon_\mathrm{OU}$-noise-driven SSE. 
    Notice the difference in smoothness between a stochastic Hamiltonian propagation and the SSE propagations.
    Example for the the generic Hamiltonian $H_{\varepsilon,\Omega}$ system, with noise intensity $\gamma = 0.7 \sqrt{\hbar\theta}$ 
    and with noise operator $R=\sigma_x$.
    }
    \label{fig:evol_trajs}
\end{figure*}

\subsection{Diagonal noise}

When the noise is on the diagonal of the Hamiltonian in the local basis, i.e., $R=\sigma_z$, it physically represents fluctuations in the local state energies associated with
the instantaneous configurations of the bath.  
The rationale for this model is that bath-induced coupling between different local states is small relative to the interstates coupling
$\Omega$, and the bath affects mainly the local configurations. 

The average dynamics of the system affected by the different stochastic sources
are shown in \Cref{fig:PZ_hgen}. We chose parameters corresponding to slow relaxation so that coherent oscillations of the site population can be observed.  

In this setting, the $X_\mathrm{OU}$-driven dynamics show a slower relaxation of the beating, decaying with a single timescale. 
The behavior of the $\Upsilon_\mathrm{OU}$-driven system is similar to the white noise one,
with a slightly better robustness of the coherent oscillations to the relaxation.
This small effect gets more evident as the intensity of the noise increases (see supplementary material).
A second observation concerns the change in oscillation frequency, compared to the close-system dynamics.
While the $\xi$-driven mean dynamics does not show any frequency change,
the $\Upsilon_\mathrm{OU}$-driven systems show a slower oscillation. In contrast, the $X_\mathrm{OU}$-driven mean dynamics is characterized by a higher oscillation frequency, increasing with the noise intensity.

As a final comment, we observe that the initial time dynamics is characterized by a null derivative at time zero, which is expected as the physically correct behavior of a system interacting with an environment. This happens when the noise coupling operator is purely local, as in the present case, while it is not guaranteed in other cases. 

\begin{figure}
    \centering
    \includegraphics[width=0.85\linewidth]{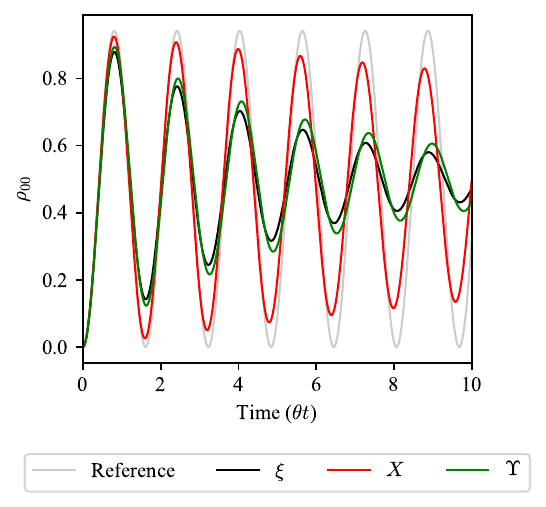}
    \caption{
    Dynamics of the system described by the Hamiltonian $H_{\varepsilon,\Omega}$, with the stochastic forces applied through the Pauli $\sigma_z$ operator.
    In light grey, the closed-system \schr equation evolution as the reference of the coherent beating frequency of the population.}
    \label{fig:PZ_hgen}
\end{figure}

\subsection{Noise on the coupling}

Within the setting of the noise on the off-diagonal elements of the site-basis frame,
$R=\sigma_x$,
we can work with a null Hamiltonian ($H_0=0$).
This allows us to study the effect of the sole dissipators, free of the intrinsic dynamics of the system,
as in the example in \cref{sec:EOMs}.
Using white noise as the source, the evolution with the noise applied through the Pauli $\sigma_x$ unravels a Pauli Master Equation.
The use of the other colored process and noise changes the dynamics.
In this setting, we clearly see the peculiar effect of the $\Upsilon_\mathrm{OU}$-noise as the driving of the dissipation channel.

\begin{figure*}
    \centering
    \includegraphics[width=0.92\linewidth]{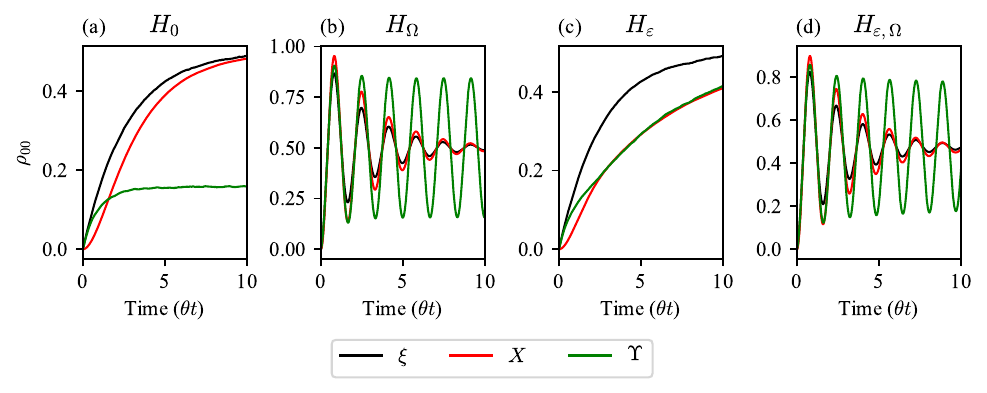}
    \caption{Dynamics of the system described by the different Hamiltonians, with the stochastic forces applied through the Pauli $\sigma_x$ operator.}
    \label{fig:px_Hall}
\end{figure*}

In \Cref{fig:px_Hall}a, the mean dynamics described by  \cref{eq:nM-QME_OUdriven},
in particular in the form \cref{eq:pauli-like_nM-QME},
is shown together with the mean dynamics resulting from the $\xi$-driven process, leading to a Lindblad dissipator,
and to the OU stochastic Hamiltonian dynamics, which bears only the term of correlation with the environment, without Lindblad-like components, \cref{eq:StochHamilt_notSDE_X}.
The noteworthy result is that the
process driven by the $\Upsilon_\mathrm{OU}$-noise reaches a different stationary state, 
while the other propagation reaches the expected equipartite distribution over the two sites.
The dynamics starts by following the same short-time behavior of the white noise case, but the stationary value of the site populations depends on the intensity of the white noise component (see supplementary material).

In \Cref{fig:px_Hall}b we show a similar behavior in the resonant coupled system ($H_{0,\Omega}$). 
In this case, the long-time populations are not constant, and an oscillating steady state is reached.

One could expect the environment correlation to affect mainly the short-time dynamics of the system.
Indeed, this is what happens using the OU process in all the instances of the Hamiltonian of the system.
In contrast, the dynamics with $\Upsilon_\mathrm{OU}$-driven dissipation starts with a transient similar to that of the $\xi$-driven one 
and then settles to a different stationary state.
On the other hand, 
the $X_\mathrm{OU}$ process-driven dynamics
start with a null derivative at time zero, typical of purely Hamiltonian dynamics.
This is particularly evident in the $H_{0,0}$ system, but the same effect is present in all the dynamics using $R=\sigma_x$. 
The different initial time behavior of the two noise-driven dissipations can be explained by the common Lindblad dissipator in the corresponding QMEs. 
This dissipator underlines an actual SDE component and not simply a stochastic process in the Hamiltonian, and clearly induces an exponential decay even at $t_0$. 

Up to this point, the two cases we have considered (\Cref{fig:px_Hall}a and b) are characterized by a symmetric Hamiltonian which commutes with the coupling operator to the stochastic component, $H_0$ being the trivial case.  
When we consider a system where the two sites are not degenerate, \Cref{fig:px_Hall}c, or the principal directions of the Hamiltonian are changed by a coupling term, \Cref{fig:px_Hall}d,
thus breaking the symmetry of the system, the dynamics changes.
The common and most important difference is that all dynamics in these cases reach the equipartite distribution; however, they do so following quite different transient dynamics. 

Let us first consider the simple addition of an energy asymmetry,
the Hamiltonian $H_{\varepsilon,0}$, where the two sites are not coupled
directly but through the interaction with the environment.
For the dynamics of the observable at short time,
the same considerations made above on the derivative at time zero and the similar behavior among the noise-driven and process-driven systems hold.
Then, for longer times,
in \Cref{fig:px_Hall}c for the selected noise intensity, the dynamics of the $\Upsilon_\mathrm{OU}$-driven system becomes comparable to the $X_\mathrm{OU}$-driven one,
eventually reaching equidistribution of the populations.

Depending on the intensity of the white noise component, different dynamics can be observed (see supplementary material).
At low intensity, the growth of the observable of the $\Upsilon_\mathrm{OU}$-driven system is slower than the $X_\mathrm{OU}$-driven one, yet closing in as $\gamma$ increases (see \cref{fig:px_he} in the supplementary material).
For higher noise intensity
the $\Upsilon_\mathrm{OU}$-driven system resembles more and more the $\xi$-driven system at short time and $X_\mathrm{OU}$-driven system at longer time.  
For $\gamma\gg1$, the white noise component prevails and the dissipation profile becomes indistinguishable from the Lindblad evolution.

When we add the coupling into the system Hamiltonian, the coherent evolution is characterized by two different relaxation regimes,
a faster decay at short times and then a slower decay (\Cref{fig:px_Hall}d), eventually driving the system to equipartition.
The correlation term seems to enhance the robustness of the coherent oscillations with respect to the noise, contrasting the Lindblad term of the dissipator.
As a result, the coherent oscillations are long-lived.

\subsection{Comparison with Redfield dynamics}

As derived in \Cref{sec:RED}, the Redfield equation with time-dependent coefficients can be used to approximate the average dynamics of the SSE drive by the $\Upsilon_\mathrm{OU}$-noise. In this section, the numerical solution of the Redfield model dynamics is investigated, and it shows good agreement with the SSE results for the observable under study.

\begin{figure*}
    \centering
    \includegraphics[width=0.85\linewidth]{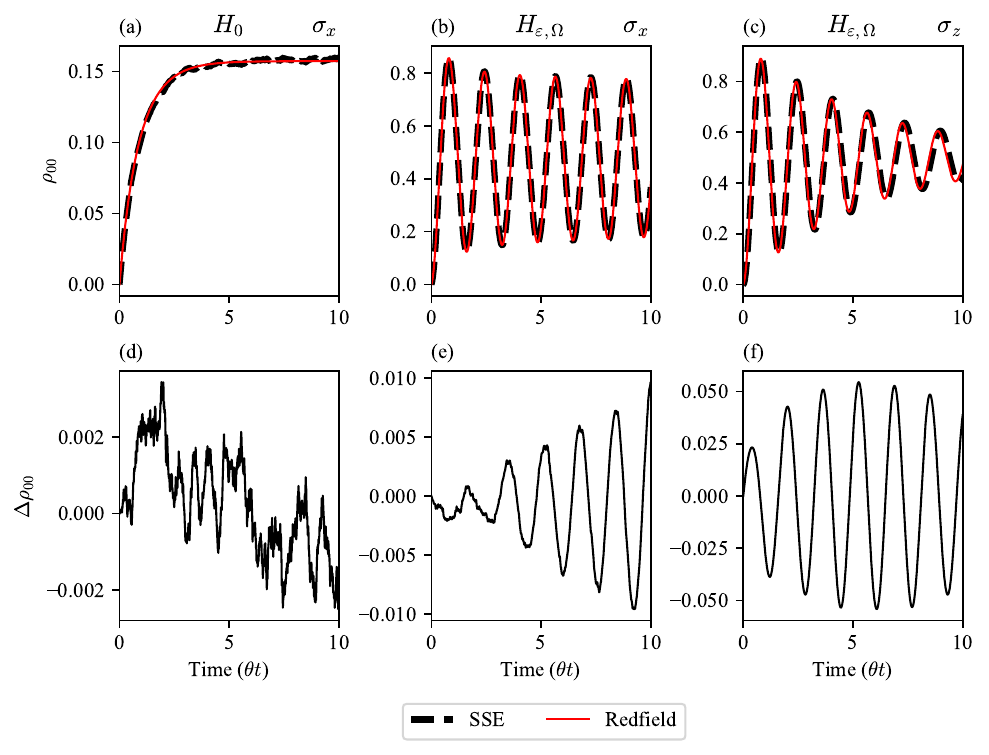}
    \caption{(a-c) Comparison of the $\rho_{00}$ dynamics,
    for the $H_{0}$ and $H_{\varepsilon,\Omega}$ Hamiltonians. 
    The dynamics computed by stochastic averaging are displayed as dashed bold black lines, and the Redfield model in red lines.
    (d-f) Absolute errors w.r.t.\ the SSE-obtained population.
    The noise is applied through $\sigma_x$ in (a,d) and (b,e),
    and through $\sigma_z$ in (c,f).}
    \label{fig:comparison_SSEvsRedfield}
\end{figure*}

In the simplest case of a null Hamiltonian, only the channel at zero frequency contributes to the relaxation. We recall that the associated rate vanishes at long time as shown in \cref{fig:timeCoeffiecientsRedfield}a. 
Looking at \cref{fig:comparison_SSEvsRedfield}a, we see that this approach correctly predicts
the dynamics of the null Hamiltonian system, the absolute error is shown in \cref{fig:comparison_SSEvsRedfield}d and it fluctuates with an amplitude that depends on the noisiness of the SSE trajectories average.
This is also true for coupled resonant systems.

When an energy difference is introduced in the system, like in 
the generic Hamiltonian $H_{\varepsilon,\Omega}$,
the difference between the Redfield dynamics and the average obtained from the SSE accumulates at early times because of slightly detuned coherent oscillations, \cref{fig:comparison_SSEvsRedfield}b.
At longer times, both dynamics converge to the same stationary states reaching equidistribution of populations.

The difference in the oscillation frequency is even more noticeable
in the case of noise on the sites ($R=\sigma_z$), see \cref{fig:comparison_SSEvsRedfield}c and \ref{fig:comparison_SSEvsRedfield}f.
The Redfield equation shows a slight speed-up of the oscillation, while the $\Upsilon_\mathrm{OU}$-derived QME slows them down depending on the intensity of the noise, see also \cref{fig:PZ_hgen}.

As a last note, one comment on the positivity of the dynamics.
Although there is good agreement in the numerical results, 
we want to stress that
the Redfield model does not ensure positivity of the dynamics, as pointed out in \Cref{sec:RED}.
In the results presented here, and for all parameters we used (see supplementary material),
including strong noise intensities $\gamma$, the numerical solution is always positive.

\section{Conclusions\label{Conclusions}}

Throughout this work, we have used the linear stochastic \schr equation unraveling approach to construct and understand
the quantum master equation generated by a colored environment. Using colored noise and processes as stochastic potentials, we obtained master equations that are open-form and not in Lindblad form, yet are positive definite by construction due to the averaging over pure-state trajectories.
This paves the way for using SSE as more than an efficient numerical strategy for known QMEs, but also for writing new non-trivial QMEs and exploring situations where an explicit QME cannot even be written.

From the analytical approach to the SSE using the \ito differential of the OU process as the stochastic drive, a less-known colored noise emerges, $\Upsilon_\mathrm{OU}$, in analogy to white noise arising from the use of a Wiener differential.
The definition and properties of this correlated noise are studied, and its effects on the dynamics of a two-level system are numerically investigated. 
We presented the normalized linear SSE driven by this noise as the unraveling of its open-form correlated QME.
In addition, we showed that different terms in the dissipator of the average dynamics correspond to either a Lindblad form unraveled by white noise SSE or correlation terms that are analogous to those obtained for the average dynamics of a stochastic Hamiltonian with Ornstein--Uhlenbeck fluctuations. 

Through the Redfield derivation, we gained insights into the peculiarity of the dynamics obtained using $\Upsilon_\mathrm{OU}$-colored noise, obtaining a closure model for the open term of the correlation of the system with the environment in the QME corresponding to the colored SSE.
Hence, though the SSE unravels a precise, but open form, master equation, \cref{eq:nM-QME_OUdriven},
we can think about these forms of Redfield QME, \cref{eq:RedfieldFrequenx,eq:RedfieldFreqFINAL}, as approximate models for it---an \textit{effective} QME for the SSE unraveling.


The analysis of the dynamics of a two-level system driven by colored SSE highlights important features.
When the environment is described as an $\Upsilon_\mathrm{OU}$-noise applied through the Pauli $\sigma_x$ operator to the interaction between the levels, a strong effect on the coherence time is observed.
When the system's Hamiltonian commutes with the noise operator, a different
asymptotic density matrix is reached, and the system does not reach equipartition of populations.
The steady state can be either static, for a null Hamiltonian, or an oscillating steady state for coupled levels.
The coherences are therefore maintained for these symmetric systems. 
The introduction of an asymmetry in the system, 
as an energy difference between the two levels,
results in reaching the equipartite distribution expected using symmetric relaxation operators. 
However, the coherences are long-lived, the dynamics showing two different regimes of relaxation 
with different time scales, i.e., a faster initial decay followed by a slower relaxation.
Although not analytically derivable from the SSE unraveling, the presence of different time scales is well explained by the Redfield master equation with time-dependent coefficients. The Redfield model also rationalizes the different stationary states and provides an insightful intuition on the effect of the noise in the system dynamics.

\section*{Supplementary materials}
The supplementary material contains additional information on the formal derivation of the correlation function of the correlated noise, the normalization condition constraints, 
and additional numerical results for varying noise intensities, both from trajectory-average and reduced model. 


\section*{Acknowledgments}

P.D.C. acknowledges the quantum computing resources provided by the Padua Quantum Computing and Simulation Center and the CINECA award under the \mbox{ISCRA} initiative, for the availability of high-performance computing resources and support.

P.D.C., G.G.G.\ and B.F.\ acknowledge funding from the European Union - NextGenerationEU, within the National Center for HPC, Big Data and Quantum Computing (Project no. CN00000013, CN1 Spoke 10: “Quantum Computing”).
B.F. and F.G.\ acknowledge the financial support by the Department of Chemical Sciences (DiSC) and the University of Padova with Project QA-CHEM (P-DiSC No. 04BIRD2021-UNIPD).

\bibliography{bibliography/reference}

@article{DeKeijzer2025QubitNoise,
    title = {{Qubit fidelity distribution under stochastic Schr{\"{o}}dinger equations driven by classical noise}},
    year = {2025},
    journal = {Physical Review Research},
    author = {De Keijzer, R. J.P.T. and Visser, L. Y. and Tse, O. and Kokkelmans, S. J.J.M.F.},
    number = {2},
    month = {4},
    pages = {023063},
    volume = {7},
    publisher = {American Physical Society},
    url = {https://journals.aps.org/prresearch/abstract/10.1103/PhysRevResearch.7.023063},
    doi = {10.1103/PHYSREVRESEARCH.7.023063/},
    issn = {26431564},
    keywords = {doi:10.1103/PhysRevResearch.7.023063 url:https://doi.org/10.1103/PhysRevResearch.7.023063}
}

@article{DeKeijzer2025Fidelity-enhancedControl,
    title = {{Fidelity-enhanced variational quantum optimal control}},
    year = {2025},
    journal = {Physical Review A},
    author = {De Keijzer, R. J.P.T. and Visser, L. Y. and Tse, O. and Kokkelmans, S. J.J.M.F.},
    number = {5},
    month = {5},
    pages = {052625},
    volume = {111},
    publisher = {American Physical Society},
    url = {https://journals.aps.org/pra/abstract/10.1103/PhysRevA.111.052625},
    doi = {10.1103/PhysRevA.111.052625},
    issn = {24699934},
    arxivId = {2501.17692}
}

@article{JansevanRensburg2025FidelityNoise,
    title = {{Fidelity Relations in an Array of Neutral Atom Qubits -- Experimental Validation of Control Noise}},
    year = {2025},
    author = {Janse van Rensburg, Deon and de Keijzer, Robert and Venderbosch, Rogier and van der Werf, Yuri and Mellado, Jesus del Pozo and Lous, Rianne and Vredenbregt, Edgar and Kokkelmans, Servaas},
    month = {6},
    url = {https://arxiv.org/pdf/2506.16974},
    isbn = {2506.16974v1},
    arxivId = {2506.16974},
    keywords = {physics.atom-ph, quant-ph}
}

@article{Jacobs2006AMeasurement,
    title = {{A Straightforward Introduction to Continuous Quantum Measurement}},
    year = {2006},
    journal = {Contemporary Physics},
    author = {Jacobs, Kurt and Steck, Daniel A},
    number = {5},
    pages = {279--303},
    volume = {47},
    doi = {10.1080/00107510601101934},
    keywords = {0545Ac, 0545Pq, numbers: 0365Bz}
}

@article{Bondarenko2020ComparisonSystems,
    title = {{Comparison of methods to study excitation energy transfer in molecular multichromophoric systems}},
    year = {2020},
    journal = {Chemical Physics},
    author = {Bondarenko, Anna S. and Knoester, Jasper and Jansen, Thomas L.C.},
    month = {1},
    pages = {110478},
    volume = {529},
    publisher = {Elsevier B.V.},
    doi = {10.1016/j.chemphys.2019.110478},
    issn = {03010104},
    keywords = {Excitation energy transfer, HSR, MC-FRET, Multichromophoric aggregates, NISE}
}

@article{gallina2022,
   author = {Federico Gallina and Matteo Bruschi and Barbara Fresch},
   doi = {10.1088/1367-2630/AC512F},
   issn = {1367-2630},
   issue = {2},
   journal = {New Journal of Physics},
   keywords = {digital quantum simulations,environment assisted quantum transport,excitonic transport,open quantum system dynamics,quantum algorithms},
   month = {2},
   pages = {023039},
   publisher = {IOP Publishing},
   title = {Strategies to simulate dephasing-assisted quantum transport on digital quantum computers},
   volume = {24},
   url = {https://iopscience.iop.org/article/10.1088/1367-2630/ac512f https://iopscience.iop.org/article/10.1088/1367-2630/ac512f/meta},
   year = {2022}
}

@book{Rivas2011OpenIntroduction,
    title = {{Open Quantum Systems. An Introduction}},
    year = {2011},
    author = {Rivas, {\'{A}}ngel and Huelga, Susana F.},
    edition = {1},
    month = {4},
    publisher = {Springer Berlin},
    url = {http://arxiv.org/abs/1104.5242 http://dx.doi.org/10.1007/978-3-642-23354-8},
    address = {Heidelberg},
    doi = {10.1007/978-3-642-23354-8},
    arxivId = {1104.5242}
}

@article{Hu2020ADevices,
    title = {{A quantum algorithm for evolving open quantum dynamics on quantum computing devices}},
    year = {2020},
    journal = {Scientific Reports 2020 10:1},
    author = {Hu, Zixuan and Xia, Rongxin and Kais, Sabre},
    number = {1},
    month = {2},
    pages = {1--9},
    volume = {10},
    publisher = {Nature Publishing Group},
    url = {https://www.nature.com/articles/s41598-020-60321-x},
    doi = {10.1038/s41598-020-60321-x},
    issn = {2045-2322},
    pmid = {32094482},
    arxivId = {1904.00910},
    keywords = {Quantum simulation, Qubits, Theoretical chemistry}
}

@article{Miessen2022QuantumDynamics,
    title = {{Quantum algorithms for quantum dynamics}},
    year = {2022},
    journal = {Nature Computational Science 2022 3:1},
    author = {Miessen, Alexander and Ollitrault, Pauline J. and Tacchino, Francesco and Tavernelli, Ivano},
    number = {1},
    month = {12},
    pages = {25--37},
    volume = {3},
    publisher = {Nature Publishing Group},
    url = {https://www.nature.com/articles/s43588-022-00374-2},
    doi = {10.1038/s43588-022-00374-2},
    issn = {2662-8457},
    keywords = {Information theory and computation, Quantum information, Quantum simulation}
}

@article{Biele2012ASystems,
    title = {{A stochastic approach to open quantum systems}},
    year = {2012},
    journal = {Journal of Physics: Condensed Matter},
    author = {Biele, R. and Dagosta, R.},
    number = {27},
    month = {6},
    pages = {273201},
    volume = {24},
    publisher = {IOP Publishing},
    url = {https://iopscience.iop.org/article/10.1088/0953-8984/24/27/273201 https://iopscience.iop.org/article/10.1088/0953-8984/24/27/273201/meta},
    doi = {10.1088/0953-8984/24/27/273201},
    issn = {0953-8984},
    pmid = {22713734},
    arxivId = {1112.2694}
}

@book{Barchielli2009Quantum782,
    title = {{Quantum Trajectories and Measurements in Continuous Time: The Diffusive Case, Lect. Notes Phys. 782}},
    year = {2009},
    author = {Barchielli, Alberto and Gregoratti, Matteo},
    volume = {},
    publisher = {Springer},
    address = {Berlin Heidelberg},
    doi = {10.1007/978-3-642-01298-3}
}

@article{Plenio2008Dephasing-assistedBiomolecules,
    title = {{Dephasing-assisted transport: quantum networks and biomolecules}},
    year = {2008},
    journal = {New Journal of Physics},
    author = {Plenio, M. B. and Huelga, S. F.},
    number = {11},
    month = {11},
    pages = {113019},
    volume = {10},
    publisher = {IOP Publishing},
    url = {https://iopscience.iop.org/article/10.1088/1367-2630/10/11/113019 https://iopscience.iop.org/article/10.1088/1367-2630/10/11/113019/meta},
    doi = {10.1088/1367-2630/10/11/113019},
    issn = {1367-2630},
    arxivId = {0807.4902}
}

@article{Rebentrost2009Environment-assistedTransport,
    title = {{Environment-assisted quantum transport}},
    year = {2009},
    journal = {New Journal of Physics},
    author = {Rebentrost, Patrick and Mohseni, Masoud and Kassal, Ivan and Lloyd, Seth and Aspuru-Guzik, Alán},
    number = {3},
    month = {3},
    pages = {033003},
    volume = {11},
    publisher = {IOP Publishing},
    url = {https://iopscience.iop.org/article/10.1088/1367-2630/11/3/033003 https://iopscience.iop.org/article/10.1088/1367-2630/11/3/033003/meta},
    doi = {10.1088/1367-2630/11/3/033003},
    issn = {1367-2630},
    arxivId = {0807.0929}
}

@article{Schlimgen2022QuantumOperators,
    title = {{Quantum simulation of the Lindblad equation using a unitary decomposition of operators}},
    year = {2022},
    journal = {Physical Review Research},
    author = {Schlimgen, Anthony W. and Head-Marsden, Kade and Sager, Lee Ann M. and Narang, Prineha and Mazziotti, David A.},
    number = {2},
    month = {6},
    pages = {023216},
    volume = {4},
    publisher = {American Physical Society},
    url = {https://journals.aps.org/prresearch/abstract/10.1103/PhysRevResearch.4.023216},
    doi = {10.1103/PhysRevResearch.4.023216},
    issn = {26431564},
    keywords = {doi:10.1103/PhysRevResearch.4.023216 url:https://doi.org/10.1103/PhysRevResearch.4.023216}
}

@article{Cattaneo2023QuantumComputers,
    title = {{Quantum Simulation of Dissipative Collective Effects on Noisy Quantum Computers}},
    year = {2023},
    journal = {PRX Quantum},
    author = {Cattaneo, Marco and Rossi, Matteo A.C. and Garc{\'{i}}a-P{\'{e}}rez, Guillermo and Zambrini, Roberta and Maniscalco, Sabrina},
    number = {1},
    month = {3},
    pages = {010324},
    volume = {4},
    publisher = {American Physical Society (APS)},
    url = {https://journals.aps.org/prxquantum/abstract/10.1103/PRXQuantum.4.010324},
    doi = {10.1103/PRXQuantum.4.010324},
    issn = {26913399},
    arxivId = {2201.11597},
    keywords = {doi:10.1103/PRXQuantum.4.010324 url:https://doi.org/10.1103/PRXQuantum.4.010324}
}

@article{Dijkstra2015CoherentVibrations,
    title = {{Coherent exciton dynamics in the presence of underdamped vibrations}},
    year = {2015},
    journal = {Journal of Physical Chemistry Letters},
    author = {Dijkstra, Arend G. and Wang, Chen and Cao, Jianshu and Fleming, Graham R.},
    number = {4},
    month = {2},
    pages = {627--632},
    volume = {6},
    publisher = {American Chemical Society},
    url = {https://pubs.acs.org/doi/full/10.1021/jz502701u},
    doi = {10.1021/JZ502701U},
    issn = {19487185},
    pmid = {26262477},
    arxivId = {1309.4910},
    keywords = {chlorosome, excitation energy transfer, exciton diffusion, exciton-vibration coupling, green sulfur bacteria, light-harvesting antenna system, non-Markovian effects}
}

@article{Barchielli2010StochasticNoise,
    title = {{Stochastic Schr{\"{o}}dinger equations with coloured noise}},
    year = {2010},
    journal = {Europhysics Letters},
    author = {Barchielli, A. and Pellegrini, C. and Petruccione, F.},
    number = {2},
    month = {8},
    pages = {24001},
    volume = {91},
    publisher = {IOP Publishing},
    url = {https://iopscience.iop.org/article/10.1209/0295-5075/91/24001 https://iopscience.iop.org/article/10.1209/0295-5075/91/24001/meta},
    doi = {10.1209/0295-5075/91/24001},
    issn = {0295-5075},
    arxivId = {0911.2554}
}

@article{Donvil2022QuantumEquations,
    title = {{Quantum trajectory framework for general time-local master equations}},
    year = {2022},
    journal = {Nature Communications},
    author = {Donvil, Brecht and Muratore-Ginanneschi, Paolo},
    number = {1},
    month = {12},
    volume = {13},
    publisher = {Nature Research},
    doi = {10.1038/s41467-022-31533-8},
    pmid = {35842427}
}

@article{Bally1995TheCalculus,
    title = {{The Euler scheme for stochastic differential equations: error analysis with Malliavin calculus}},
    year = {1995},
    journal = {Mathematics and Computers in Simulation},
    author = {Bally, Vlad and Talay, Denis},
    number = {1-3},
    month = {5},
    pages = {35--41},
    volume = {38},
    publisher = {North-Holland},
    doi = {10.1016/0378-4754(93)E0064-C},
    issn = {0378-4754}
}

@article{Uchiyama2018EnvironmentalTransport,
    title = {{Environmental engineering for quantum energy transport}},
    year = {2018},
    journal = {npj Quantum Information 2018 4:1},
    author = {Uchiyama, Chikako and Munro, William J. and Nemoto, Kae},
    number = {1},
    month = {7},
    pages = {1--7},
    volume = {4},
    publisher = {Nature Publishing Group},
    url = {https://www.nature.com/articles/s41534-018-0079-x},
    doi = {10.1038/s41534-018-0079-x},
    issn = {2056-6387},
    arxivId = {1711.01025},
    keywords = {Quantum information, Quantum physics}
}

@article{Lloyd1996UniversalSimulators,
    title = {{Universal Quantum Simulators}},
    year = {1996},
    journal = {Science},
    author = {Lloyd, Seth},
    number = {5278},
    month = {8},
    pages = {1073--1078},
    volume = {273},
    publisher = {American Association for the Advancement of Science},
    url = {https://www.science.org/doi/10.1126/science.273.5278.1073},
    doi = {10.1126/SCIENCE.273.5278.1073},
    issn = {00368075},
    pmid = {8688088}
}

@article{Rahman2021QuantumNoise,
    title = {{Quantum memory assisted entropic uncertainty and entanglement dynamics: Two qubits coupled with local fields and Ornstein Uhlenbeck noise}},
    year = {2021},
    author = {Rahman, Atta Ur and Zidan, Nour and Zangi, S. M. and Ali, Hazrat},
    month = {11},
    url = {https://arxiv.org/abs/2111.11312v2},
    doi = {10.48550/arXiv.2111.11312},
    arxivId = {2111.11312},
    keywords = {OU noise, common and independent classical fields, concurrence, entanglement, entropic uncertainty, tightness}
}

@article{Butler2024OptimizingHare,
    title = {{Optimizing Performance of Quantum Operations with Non-Markovian Decoherence: The Tortoise or the Hare}},
    year = {2024},
    journal = {Physical Review Letters},
    author = {Butler, Eoin P. and Fux, Gerald E. and Ortega-Taberner, Carlos and Lovett, Brendon W. and Keeling, Jonathan and Eastham, Paul R.},
    number = {6},
    month = {2},
    pages = {060401},
    volume = {132},
    publisher = {American Physical Society},
    url = {https://journals.aps.org/prl/abstract/10.1103/PhysRevLett.132.060401 https://zenodo.org/records/10469096},
    doi = {10.1103/PhysRevLett.132.060401},
    issn = {10797114},
    pmid = {38394576},
    arxivId = {2303.16002}
}

@article{Gallina2024FromDynamics,
    title = {{From stochastic Hamiltonian to quantum simulation: exploring memory effects in exciton dynamics}},
    year = {2024},
    journal = {New Journal of Physics},
    author = {Gallina, Federico and Bruschi, Matteo and Fresch, Barbara},
    number = {8},
    month = {8},
    pages = {083017},
    volume = {26},
    publisher = {IOP Publishing},
    url = {https://iopscience.iop.org/article/10.1088/1367-2630/ad6a7b https://iopscience.iop.org/article/10.1088/1367-2630/ad6a7b/meta},
    doi = {10.1088/1367-2630/AD6A7B},
    issn = {1367-2630},
    arxivId = {2404.06264},
    keywords = {exciton dynamics, open quantum systems, quantum circuit, quantum digital simulation, quantum trajectories, stochastic Hamiltonian}
}

@book{Vacchini2024OpenTheory,
    title = {{Open Quantum Systems: Foundations and Theory}},
    year = {2024},
    author = {Vacchini, Bassano},
    month = {12},
    publisher = {Springer Cham},
    isbn = {978-3-031-58217-2},
    doi = {10.1007/978-3-031-58218-9},
    issn = {1868-4513}
}

@book{petruccione2002,
    title = {{The Theory of Open Quantum Systems}},
    year = {2002},
    author = {Breuer, Heinz-Peter and Petruccione, Francesco},
    edition = {1},
    month = {1},
    publisher = {Oxford University Press},
    address = {Oxford},
    isbn = {0199213909},
    doi = {10.1093/acprof:oso/9780199213900.001.0001}
}

@article{Sweke2016DigitalDynamics,
    title = {{Digital quantum simulation of many-body non-Markovian dynamics}},
    year = {2016},
    journal = {Physical Review A},
    author = {Sweke, R. and Sanz, M. and Sinayskiy, I. and Petruccione, F. and Solano, E.},
    number = {2},
    month = {8},
    pages = {022317},
    volume = {94},
    publisher = {American Physical Society},
    url = {https://journals.aps.org/pra/abstract/10.1103/PhysRevA.94.022317},
    doi = {10.1103/PHYSREVA.94.022317/FIGURES/2/MEDIUM},
    issn = {24699934},
    arxivId = {1604.00203}
}

@article{Sweke2015UniversalSystems,
    title = {{Universal simulation of Markovian open quantum systems}},
    year = {2015},
    journal = {Physical Review A - Atomic, Molecular, and Optical Physics},
    author = {Sweke, Ryan and Sinayskiy, Ilya and Bernard, Denis and Petruccione, Francesco},
    number = {6},
    month = {6},
    pages = {062308},
    volume = {91},
    publisher = {American Physical Society},
    url = {https://journals.aps.org/pra/abstract/10.1103/PhysRevA.91.062308},
    doi = {10.1103/PHYSREVA.91.062308/FIGURES/2/MEDIUM},
    issn = {10941622},
    arxivId = {1503.05028}
}

@article{Redfield1957OnProcesses,
    title = {{On the Theory of Relaxation Processes}},
    year = {1957},
    journal = {IBM Journal of Research and Development},
    author = {Redfield, A G},
    number = {1},
    month = {1},
    pages = {19--31},
    volume = {1},
    doi = {10.1147/rd.11.0019}
}

@article{Davies1974MarkovianEquations,
    title = {{Markovian master equations}},
    year = {1974},
    journal = {Communications in Mathematical Physics},
    author = {Davies, E. B.},
    number = {2},
    month = {6},
    pages = {91--110},
    volume = {39},
    publisher = {Springer-Verlag},
    url = {https://link.springer.com/article/10.1007/BF01608389},
    doi = {10.1007/BF01608389/METRICS},
    issn = {00103616},
    keywords = {Classical and Quantum Gravitation, Complex Systems, Mathematical Physics, Mathematical and Computational Physics, Quantum Physics, Relativity Theory, Theoretical}
}

@article{Caldeira1983PathMotion,
    title = {{Path integral approach to quantum Brownian motion}},
    year = {1983},
    journal = {Physica A: Statistical Mechanics and its Applications},
    author = {Caldeira, A. O. and Leggett, A. J.},
    number = {3},
    month = {9},
    pages = {587--616},
    volume = {121},
    publisher = {North-Holland},
    doi = {10.1016/0378-4371(83)90013-4},
    issn = {0378-4371}
}

@book{Gardiner2000QuantumOptics,
    title = {{Quantum Noise - A Handbook of Markovian and Non-Markovian Quantum Stochastic Methods with Applications to Quantum Optics}},
    year = {2000},
    author = {Gardiner, Crispin W and Zoller, Peter},
    edition = {2nd Enlarged Edition},
    volume = {},
    publisher = {Springer Berlin},
    address = {Heidelberg}
}

@article{Haken1973AnMotion,
    title = {{An exactly solvable model for coherent and incoherent exciton motion}},
    year = {1973},
    journal = {Zeitschrift f{\"{u}}r Physik},
    author = {Haken, H. and Strobl, G.},
    number = {2},
    month = {4},
    pages = {135--148},
    volume = {262},
    publisher = {Springer-Verlag},
    url = {https://link.springer.com/article/10.1007/BF01399723},
    doi = {10.1007/BF01399723/METRICS},
    issn = {14346001},
    keywords = {Beam Physics, Hadrons, Heavy Ions, Nuclear Fusion, Nuclear Physics, Particle Acceleration and Detection}
}

@book{Weiss2012QuantumEdition,
    title = {{Quantum dissipative systems, fourth edition}},
    year = {2012},
    booktitle = {Quantum Dissipative Systems, Fourth Edition},
    author = {Weiss, Ulrich},
    month = {1},
    pages = {1--566},
    publisher = {World Scientific Publishing Co.},
    isbn = {9789814374927},
    doi = {10.1142/8334/SUPPL{\_}FILE/8334{\_}CHAP02.PDF}
}

@incollection{Kubo1969AShape,
    title = {{A Stochastic Theory of Line Shape}},
    year = {1969},
    booktitle = {Stochastic Processes in Chemical Physics},
    author = {Kubo, Ryogo},
    month = {12},
    pages = {101--127},
    volume = {15},
    publisher = {Wiley},
        editor = {Shuler, Kurt E},
    isbn = {9780470143605},
    doi = {10.1002/9780470143605.ch6},
    keywords = {Low field spin resonance, Markovian process, M{\"{o}}ssbauer spectra, Random frequency modulation, Random modulation}
}

@article{DeChecchi2026OnHamiltonians,
    title = {{On the Noisy Road to Open Quantum Dynamics: The Place of Stochastic Hamiltonians}},
    year = {2026},
    journal = {Annalen der Physik},
    author = {De Checchi, Pietro and Gallina, Federico and Fresch, Barbara and Giusteri, Giulio G.},
    number = {1},
    month = {1},
    pages = {e00482},
    volume = {538},
    publisher = {John Wiley {\&} Sons, Ltd},
    url = {/doi/pdf/10.1002/andp.202500482 https://onlinelibrary.wiley.com/doi/abs/10.1002/andp.202500482 https://onlinelibrary.wiley.com/doi/10.1002/andp.202500482},
    doi = {10.1002/ANDP.202500482},
    issn = {1521-3889},
    keywords = {noisy quantum algorithms, open quantum systems, quantum trajectories, stochastic Hamiltonian, stochastic Schr{\"{o}}dinger equation}
}

@article{Coccia2018ProbingSystems,
    title = {{Probing quantum coherence in ultrafast molecular processes: An ab initio approach to open quantum systems}},
    year = {2018},
    journal = {The Journal of Chemical Physics},
    author = {Coccia, Emanuele and Troiani, Filippo and Corni, Stefano},
    number = {20},
    month = {5},
    volume = {148},
    publisher = {AIP Publishing},
    url = {/aip/jcp/article/148/20/204112/197052/Probing-quantum-coherence-in-ultrafast-molecular},
    doi = {10.1063/1.5022976},
    issn = {0021-9606},
    pmid = {29865798},
    arxivId = {1805.06392}
}

@article{DallOsto2024StochasticSystems,
    title = {{Stochastic Schr{\"{o}}dinger equation for hot-carrier dynamics in plasmonic systems}},
    year = {2024},
    journal = {The Journal of Chemical Physics},
    author = {Dall’Osto, Giulia and Vanzan, Mirko and Corni, Stefano and Marsili, Margherita and Coccia, Emanuele},
    number = {12},
    month = {9},
    volume = {161},
    publisher = {AIP Publishing},
    url = {/aip/jcp/article/161/12/124103/3313573/Stochastic-Schrodinger-equation-for-hot-carrier},
    doi = {10.1063/5.0221179},
    issn = {0021-9606},
    pmid = {39311071}
}

@article{Diosi1997TheSystems,
    title = {{The non-Markovian stochastic Schr{\"{o}}dinger equation for open systems}},
    year = {1997},
    journal = {Physics Letters A},
    author = {Di{\'{o}}si, Lajos and Strunz, Walter T.},
    number = {6},
    month = {11},
    pages = {569--573},
    volume = {235},
    publisher = {North-Holland},
    doi = {10.1016/S0375-9601(97)00717-2},
    issn = {0375-9601},
    arxivId = {quant-ph/9706050},
    keywords = {Coherent states, Influence functional, Non-Markovian open systems, Stochastic Schr{\={o}}dinger equations}
}

@article{Diosi1998Non-MarkovianDiffusion,
    title = {{Non-Markovian quantum state diffusion}},
    year = {1998},
    journal = {Physical Review A},
    author = {Di{\'{o}}si, L. and Gisin, N. and Strunz, W. T.},
    number = {3},
    month = {9},
    pages = {1699},
    volume = {58},
    publisher = {American Physical Society},
    url = {https://journals.aps.org/pra/abstract/10.1103/PhysRevA.58.1699},
    doi = {10.1103/PhysRevA.58.1699},
    issn = {10941622},
    arxivId = {quant-ph/9803062}
}

@article{Stockburger2001Non-MarkovianDiffusion,
    title = {{Non-Markovian quantum state diffusion}},
    year = {2001},
    journal = {Chemical Physics},
    author = {Stockburger, Jürgen T. and Grabert, Hermann},
    number = {1-3},
    month = {6},
    pages = {249--256},
    volume = {268},
    publisher = {North-Holland},
    url = {https://www.sciencedirect.com/science/article/pii/S030101040100307X},
    doi = {10.1016/S0301-0104(01)00307-X},
    issn = {0301-0104},
    keywords = {Non-Markovian process, Open quantum system, Stochastic Schr{\"{o}}dinger equation}
}

@article{Suess2014HierarchyDynamics,
    title = {{Hierarchy of stochastic pure states for open quantum system dynamics}},
    year = {2014},
    journal = {Physical Review Letters},
    author = {Suess, D. and Eisfeld, A. and Strunz, W. T.},
    number = {15},
    month = {10},
    pages = {150403},
    volume = {113},
    publisher = {American Physical Society},
    url = {https://journals.aps.org/prl/abstract/10.1103/PhysRevLett.113.150403},
    doi = {10.1103/PHYSREVLETT.113.150403},
    issn = {10797114},
    arxivId = {1402.4647}
}

@article{Gaspard1999Non-MarkovianEquation,
    title = {{Non-Markovian stochastic Schr{\"{o}}dinger equation}},
    year = {1999},
    journal = {The Journal of Chemical Physics},
    author = {Gaspard, P. and Nagaoka, M.},
    number = {13},
    month = {10},
    pages = {5676--5690},
    volume = {111},
    publisher = {AIP Publishing},
    url = {/aip/jcp/article/111/13/5676/807673/Non-Markovian-stochastic-Schrodinger-equation},
    doi = {10.1063/1.479868},
    issn = {0021-9606}
}

@article{Gaspard1999SlippageEquation,
    title = {{Slippage of initial conditions for the Redfield master equation}},
    year = {1999},
    journal = {The Journal of Chemical Physics},
    author = {Gaspard, P. and Nagaoka, M.},
    number = {13},
    month = {10},
    pages = {5668--5675},
    volume = {111},
    publisher = {AIP Publishing},
    url = {/aip/jcp/article/111/13/5668/807937/Slippage-of-initial-conditions-for-the-Redfield},
    doi = {10.1063/1.479867},
    issn = {0021-9606}
}

@article{Kleinekathofer2002StochasticClass,
    title = {{Stochastic unraveling of time-local quantum master equations beyond the Lindblad class}},
    year = {2002},
    journal = {Physical Review E},
    author = {Kleinekath{\"{o}}fer, Ulrich and Kondov, Ivan and Schreiber, Michael},
    number = {3},
    month = {9},
    pages = {037701},
    volume = {66},
    publisher = {American Physical Society},
    url = {https://journals.aps.org/pre/abstract/10.1103/PhysRevE.66.037701},
    doi = {10.1103/PhysRevE.66.037701},
    issn = {1063651X},
    pmid = {12366307},
    arxivId = {quant-ph/0208084}
}

@article{Kondov2003StochasticProblems,
    title = {{Stochastic unraveling of Redfield master equations and its application to electron transfer problems}},
    year = {2003},
    journal = {The Journal of Chemical Physics},
    author = {Kondov, Ivan and Kleinekath{\"{o}}fer, Ulrich and Schreiber, Michael and Chem Phys, J and Kleinekath{\"{o}} fer, Ulrich},
    number = {13},
    month = {10},
    pages = {6635--6646},
    volume = {119},
    publisher = {AIP Publishing},
    url = {/aip/jcp/article/119/13/6635/535000/Stochastic-unraveling-of-Redfield-master-equations},
    doi = {10.1063/1.1605095},
    issn = {0021-9606},
    arxivId = {physics/0307050}
}

@article{Vogt2013StochasticEnvironment,
    title = {{Stochastic Bloch-Redfield theory: Quantum jumps in a solid-state environment}},
    year = {2013},
    journal = {Physical Review B},
    author = {Vogt, Nicolas and Jeske, Jan and Cole, Jared H.},
    number = {17},
    month = {11},
    pages = {174514},
    volume = {88},
    publisher = {American Physical Society},
    url = {https://journals.aps.org/prb/abstract/10.1103/PhysRevB.88.174514},
    doi = {10.1103/PhysRevB.88.174514},
    issn = {10980121},
    arxivId = {1310.0559}
}

@article{Reimers2015AAromaticity,
    title = {{A unified diabatic description for electron transfer reactions, isomerization reactions, proton transfer reactions, and aromaticity}},
    year = {2015},
    journal = {Physical Chemistry Chemical Physics},
    author = {Reimers, Jeffrey R. and McKemmish, Laura K. and McKenzie, Ross H. and Hush, Noel S.},
    number = {38},
    month = {9},
    pages = {24598--24617},
    volume = {17},
    publisher = {The Royal Society of Chemistry},
    url = {https://pubs.rsc.org/en/content/articlehtml/2015/cp/c5cp02236c https://pubs.rsc.org/en/content/articlelanding/2015/cp/c5cp02236c},
    doi = {10.1039/C5CP02236C},
    issn = {1463-9084},
    pmid = {26193994}
}

@article{Giavazzi2022TheDyes,
    title = {{The fate of molecular excited states: modeling donor–acceptor dyes}},
    year = {2022},
    journal = {Physical Chemistry Chemical Physics},
    author = {Giavazzi, D. and Di Maiolo, F. and Painelli, A.},
    number = {9},
    month = {3},
    pages = {5555--5563},
    volume = {24},
    publisher = {The Royal Society of Chemistry},
    url = {https://pubs.rsc.org/en/content/articlehtml/2022/cp/d1cp05971h https://pubs.rsc.org/en/content/articlelanding/2022/cp/d1cp05971h},
    doi = {10.1039/D1CP05971H},
    issn = {1463-9084},
    pmid = {35174828}
}

@article{Gallina2024SimulatingAlgorithm,
    title = {{Simulating Non-Markovian Dynamics in Multidimensional Electronic Spectroscopy via Quantum Algorithm}},
    year = {2024},
    journal = {Journal of Chemical Theory and Computation},
    author = {Gallina, Federico and Bruschi, Matteo and Cacciari, Roberto and Fresch, Barbara},
    number = {23},
    month = {12},
    pages = {10588--10601},
    volume = {20},
    publisher = {American Chemical Society},
    url = {/doi/pdf/10.1021/acs.jctc.4c01204?ref=article_openPDF},
    doi = {10.1021/ACS.JCTC.4C01204},
    issn = {15499626},
    pmid = {39585324},
    arxivId = {2409.05548}
}

@article{Uhlenbeck1930OnMotion,
    title = {{On the Theory of the Brownian Motion}},
    year = {1930},
    journal = {Physical Review},
    author = {Uhlenbeck, G E and Ornstein, L S},
    number = {5},
    pages = {823},
    volume = {36},
    doi = {10.1103/PhysRev.36.823}
}

@article{Gillespie1996ExactIntegral,
    title = {{Exact numerical simulation of the Ornstein-Uhlenbeck process and its integral}},
    year = {1996},
    journal = {Physical Review E},
    author = {Gillespie, Daniel T.},
    number = {2},
    month = {8},
    pages = {2084},
    volume = {54},
    publisher = {American Physical Society},
    url = {https://journals.aps.org/pre/abstract/10.1103/PhysRevE.54.2084},
    doi = {10.1103/PhysRevE.54.2084},
    issn = {1063651X},
    pmid = {9965289}
}

\onecolumngrid
\newpage
\setcounter{page}{1}
\renewcommand{\thepage}{S\arabic{page}}

\renewcommand{\thefigure}{S\arabic{figure}}
\renewcommand{\thesection}{S\arabic{section}}
\renewcommand{\thesubsection}{S\arabic{subsection}}

\setcounter{figure}{0}
\setcounter{section}{0}
\setcounter{equation}{0}
\renewcommand{\theequation}{S\arabic{equation}}

\fontsize{11}{13.5}\selectfont
\onehalfspacing 
\setlength{\parindent}{0pt}

\begin{center}
    {\large\bfseries Supplementary Material \\Quantum trajectories and reduced dynamics in time-correlated environments}\\[0.5em]
    Pietro De Checchi$^1$, Federico Gallina$^{2}$, Barbara Fresch$^2$, Giulio G. Giusteri$^1$\\[1em]
    \textit{$^1$ Department of Mathematics, University of Padova, Via Trieste 63, Padova 35131,  Italy}\\
    \textit{$^2$ Department of Chemical Sciences, University of Padova, Via Marzolo 1, Padova 35131,  Italy}\\
\end{center}


\section{Properties of the noise associated with the Ornstein--Uhlenbeck process}
\label{app:CovY_computing}

The covariance of the $\Upsilon_\mathrm{OU}$ noise is obtained 
testing the two-time average against two test functions, $\phi,\psi\in\mathcal{C}^\infty_C$.
We first write the covariance in terms of derivatives of $(X_t)$:
\begin{equation}
    \text{cov}(\Upsilon_t,\Upsilon_s) = \mathbb{E}[\Upsilon_t\Upsilon_s] = \mathbb{E}[\dot{X}_t\dot{X}_s]
\end{equation}
and now we test it in the double-integral, which can be recast by integration by parts twice into
\begin{equation}
    \iint_{-\infty}^{+\infty} \mathbb{E}[\dot{X}_t\dot{X}_s]\phi_t\psi_s \diff t \diff s 
    =
    \iint_{-\infty}^{+\infty} \mathbb{E}[X_t\,X_s]\dot{\phi}_t\dot{\psi}_s \diff t \diff s
\end{equation}
the negative sign is not present as we integrate by parts twice and the boundary terms go to zero.
We have obtained a known quantity, the covariance of the Orstein-Uhlenbeck process, and by integrating twice by parts one last time:
\begin{equation}
    \iint_{-\infty}^{+\infty} \text{cov}(X_t, X_s) \dot{\phi}_t\dot{\psi}_s \diff t \diff s = \iint_{-\infty}^{+\infty} \frac{\diff}{\diff t}\left(\frac{\diff}{\diff s}\Big(\text{cov}(X_t, X_s)\Big)\right)\phi_t\psi_s \diff t \diff s 
\end{equation}
we find a way to obtain the covariance of the new noise.
Attention must be posed to the fact that the following computations have to be taken in the distributional sense, meaning they make sense upon integration.
\begin{subequations}
\begin{align}
    \text{cov}(\Upsilon_t,\Upsilon_s) &=  
    \frac{\diff}{\diff t}\left(\frac{\diff}{\diff s}\left(
    \frac{\gamma^2}{2\theta}\left( e^{-\theta |t-s|} -e^{-\theta (t+s)} \right)
    \right)\right) \\[2ex]
    &= \frac{\gamma^2}{2\theta} \frac{\diff}{\diff t}\left(
      \theta e^{-\theta |t-s|} \text{sign}(t-s)  + \theta e^{-\theta (t+s)} \right) \\[2ex]
    & = \frac{\gamma^2}{2\theta} \left(
      -\theta^2 e^{-\theta |t-s|} \text{sign}^2(t-s) +\theta e^{-\theta |t-s|} \text{sign}'(t-s)  - \theta^2 e^{-\theta (t+s)} 
      \right)
\end{align}
\end{subequations}

The expression of the covariance of the noise is obtained by recalling that this is valid upon the integration,
therefore 
we consider that the square of the sign is one, as the single point in zero brings no contribution,
and the derivative of the sign function can be written as
$ \text{sign}(x) = 2 \Theta(x) - 1 $,
where $\Theta(x)$ is the Heaviside function.
Then its derivative is not simply a Dirac's delta function, but twice that: 
$\text{sign}'(x)=2\delta_0(x)$.
Finally, the covariance of the $\Upsilon_\mathrm{OU}$ noise is: 
\begin{equation}
    \text{cov}(\Upsilon_t,\Upsilon_s) =  -\frac{\gamma^2 \theta}{2} e^{-\theta |t-s|} + \gamma^2\delta(t-s) 
\end{equation}

\newpage
\section{Constraint due to the normalization condition}\label{app:proofAntiHermit}

\begin{proposition}
\label{prop:Hermitian_BiR_toOUSSE_normalized}
    Using the stochastic differential $\diff{X_t}$ of an Ornstein--Uhlenbeck process $(X_t)_{t\geq0}$, 
    defined in \cref{eq:OrnsteinUhlenbeckProcess},
    as the noise source in the SSE \cref{eq:genericSSE} with deterministic operators $A$ and $B$,
    the martingale property of the norm   
    is ensured if and only if the 
    $B$ operator
    is an anti-Hermitian operator given by $B = iR$ with  $R\daggah=R$.
\end{proposition}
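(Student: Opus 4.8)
The plan is to run an \ito-calculus computation of $\diff\big(\psi_t\daggah\psi_t\big)$ for the OU-driven SSE and then determine exactly when its drift can be annihilated by a \emph{deterministic, time-independent} choice of $A$. Substituting the OU equation \cref{eq:OrnsteinUhlenbeckProcess} into the generic SSE \cref{eq:genericSSE} puts it in the \ito form already displayed in \cref{eq:nonNorm-OU-SSE}, namely $\diff\psi_t=(A-\theta X_tB)\psi_t\diff t+\sigma B\psi_t\diff W_t$; applying the \ito product rule together with $(\diff W_t)^2=\diff t$ (equivalently $(\diff X_t)^2=\sigma^2\diff t$) gives
\begin{equation*}
\begin{split}
\diff\big(\psi_t\daggah\psi_t\big) &= \bra{\psi_t}\big(A\daggah+A+\sigma^2 B\daggah B\big)\ket{\psi_t}\,\diff t \\
&\quad - \theta X_t\,\bra{\psi_t}\big(B\daggah+B\big)\ket{\psi_t}\,\diff t \\
&\quad + \sigma\,\bra{\psi_t}\big(B\daggah+B\big)\ket{\psi_t}\,\diff W_t,
\end{split}
\end{equation*}
which is precisely the content behind \cref{eq:conditionAB,eq:normalization_OUSSE}. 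Taking the expectation the stochastic term drops (it is a mean-zero \ito integral), so $\|\psi_t\|^2$ is a martingale if and only if the two drift contributions cancel for every $t$ and every realization.

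For the necessity direction I would argue as follows. The first drift term is built only from the fixed operators $A$ and $B$, whereas the second is proportional to the scalar $X_t$, which fluctuates in time through the Wiener increment in \cref{eq:OrnsteinUhlenbeckProcess}, with operator coefficient $B\daggah+B$. Since $A$ is required deterministic and time-independent it cannot compensate this fluctuating contribution for all realizations simultaneously, so the $X_t$-proportional piece must vanish on its own: $\bra{\psi}\big(B\daggah+B\big)\ket{\psi}=0$ for all states $\psi$ accessible to the flow, and by polarization (a Hermitian operator whose quadratic form vanishes everywhere is zero) $B\daggah+B=0$, i.e.\ $B$ is anti-Hermitian. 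Writing $B=iR$, anti-Hermiticity is equivalent to $R\daggah=R$; the residual condition $A\daggah+A+\sigma^2 B\daggah B=0$ then fixes the Hermitian part of $A$, recovering \cref{eq:normal_A} and, with the free anti-Hermitian part equal to the Hamiltonian generator, the operator of \cref{eq:Aoperatorwithcorrection}. To turn the phrase ``$A$ cannot compensate'' into a rigorous statement I would use a short-time expansion: after imposing $A\daggah+A+\sigma^2 B\daggah B=0$ (which is already forced at $t=0$ since $\mathbb{E}[X_0]=0$), one computes $\tfrac{\diff}{\diff t}\mathbb{E}\big[X_t\,\bra{\psi_t}(B\daggah+B)\ket{\psi_t}\big]\big|_{t=0}$ by the \ito rule and finds it equal to a nonzero constant (of order $\sigma^2$, in fact $\sigma^2/2$ for a stationary $X_0$ and $\sigma^2$ for a deterministic one) times $\bra{\psi_0}\big(2B\daggah B+B^2+(B\daggah)^2\big)\ket{\psi_0}$, forcing $B\daggah+B=0$.

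For sufficiency the computation closes immediately: if $B=iR$ with $R\daggah=R$ then $B\daggah+B=0$ kills \emph{both} the $X_t$-drift and the stochastic term at once, and $B\daggah B=R^2$ is positive semi-definite; choosing $A$ with Hermitian part $-\tfrac12\sigma^2 R^2$ (and anti-Hermitian part the Hamiltonian generator) makes the remaining drift vanish identically, so $\diff\big(\psi_t\daggah\psi_t\big)=0$ pathwise. Hence $\|\psi_t\|^2=\|\psi_0\|^2=1$ for all $t$, which is in particular a martingale, and the normalization of $\rho_t$ in \cref{eq:rho_as_average} is preserved.

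The step I expect to be the main obstacle is exactly the rigorous version of necessity — ruling out a hidden cancellation between the $X_t$-drift and the $A,B$-drift. At $t=0$ the cross term $\mathbb{E}\big[X_t\,\bra{\psi_t}(B\daggah+B)\ket{\psi_t}\big]$ vanishes trivially when $\mathbb{E}[X_0]=0$, so the argument cannot be localized at the initial time and must exploit the genuine randomness of $X_t$ for $t>0$ (or, equivalently, the fact that $A$ is not allowed to depend on the noise realization). The short-time expansion above is the route I would take, together with the elementary observation that $2B\daggah B+B^2+(B\daggah)^2$ is Hermitian and vanishes only for anti-Hermitian $B$ (writing $B=H+iK$ with $H,K$ Hermitian, this operator equals $4H^2+2i[H,K]$, whose trace is $4\,\mathrm{Tr}(H^2)$, so its vanishing forces $H=0$); hence it has a nonzero expectation on some initial state whenever $B\daggah+B\neq 0$, and the martingale property genuinely fails unless $B=iR$ with $R\daggah=R$.
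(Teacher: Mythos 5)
Your proof is correct, and the sufficiency half coincides with the paper's. The necessity half, however, takes a genuinely different route. The paper argues pathwise: since the drift of $\|\psi_t\|^2$ must vanish for (almost) every realization and every $t$, and since the zero-mean Gaussian $X_t$ vanishes at some times and not at others, the $X_t$-proportional piece $\theta X_t\bra{\psi_t}(B\daggah+B)\ket{\psi_t}$ must cancel on its own, forcing $B\daggah=-B$; the $A$-condition is then read off at the zero-crossings of $X_t$. You instead work at the level of expectations only: after noting that localizing at $t=0$ is inconclusive because $\mathbb{E}[X_0]=0$ makes the cross term vanish trivially, you compute $\tfrac{\diff}{\diff t}\mathbb{E}\bigl[X_t\bra{\psi_t}(B\daggah+B)\ket{\psi_t}\bigr]\big|_{t=0}$ by \ito calculus and obtain a nonzero multiple of $\bra{\psi_0}\bigl(2B\daggah B+B^2+(B\daggah)^2\bigr)\ket{\psi_0}=\bra{\psi_0}\bigl(4H^2+2i[H,K]\bigr)\ket{\psi_0}$ (with $B=H+iK$), whose trace $4\,\mathrm{Tr}(H^2)$ vanishes only for anti-Hermitian $B$. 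I checked the coefficient ($\sigma^2/2$ for stationary $X_0$, $\sigma^2$ for deterministic $X_0$) and the operator identity; both are right. What each approach buys: the paper's zero-crossing argument is shorter and more intuitive, but it implicitly demands that the drift vanish as an operator identity pathwise and leans on the informal claim that the deterministic $A$ ``cannot compensate'' the fluctuating term; your second-order expansion proves necessity under the strictly weaker hypothesis $\mathbb{E}[\|\psi_t\|^2]\equiv 1$, rules out hidden cancellations quantitatively, and your trace argument cleanly upgrades the vanishing of a quadratic form to the vanishing of the operator. Your observation that with $B=iR$ the norm is conserved pathwise (not merely in mean) is also a worthwhile strengthening that the paper's proof does not state explicitly, and it is what underlies the unitary single-trajectory implementation discussed in the introduction.
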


\begin{proof}

\begin{itemize}
    \item[($\Leftarrow$)] If $B=iR$, we compute the term to set to zero,
    \begin{equation}
        A\daggah+A -\theta X_t (-iR+iR) + \gamma^2R\daggah R = 0,
    \end{equation}
    that simplifies to 
            $ A\daggah + A + \gamma^2R\daggah R = 0 $,
    and the normalization term is obtained in the same fashion as for the white noise-driven SSE, \cref{eq:normal_A,eq:Aoperatorwithcorrection}:
    \begin{equation}
      A = -iH -\frac{1}{2}R\daggah R.
    \end{equation}
    
    \item[($\Rightarrow$)] 
    Given $(X_t)_{t\geq0}\in\mathbb{R},\mathcal{C}$ Gaussian r.v. with $\mathbb{E}[X_t]=0$, then
    $\forall\omega\,\exists t=t_c:\,X_{\omega,t_c}=0$.
    Then, $\forall\omega,\forall t=t_c$ the condition of normalization becomes again 
    \begin{equation}
       A\daggah + A + \gamma^2B\daggah B = 0 
    \end{equation}
    leading to the same normalization term as in \cref{eq:normal_A}.
    Since this normalization must hold true 
    at all times, then
    \begin{equation}
        \forall t\neq t_c ,  X_t\neq0\; \text{ then } \theta X_t(B\daggah+B)=0\,  \implies B\daggah = -B
    \end{equation}
\end{itemize}
\end{proof}

\newpage
\section{Additional numerical results}

In this section, we present additional numerical results illustrating the system dynamics under different stochastic driving mechanisms, together with a comparison to the dynamics obtained from the Redfield model with time-dependent coefficients, for noise intensities varied in the range $0.04$–$1.3\sqrt{\hbar\theta}$.
From \Cref{fig:PZ_hw} to \ref{fig:px_hGEN}, we compare the dynamics driven by white noise $\xi_t$, the Ornstein–Uhlenbeck process $X_{\mathrm{OU}}$, and $\Upsilon_{\mathrm{OU}}$ noise.
From \Cref{fig:sseVSredfield_H0} to \ref{fig:sseVSredfield_SIGMAZ}, we compare the numerical solution of the Redfield equation to the ensemble-averaged dynamics of the stochastic Schrödinger equation driven by $\Upsilon_{\mathrm{OU}}$ noise.





\begin{figure}[H]
    \centering
    \includegraphics[width=0.82\linewidth]{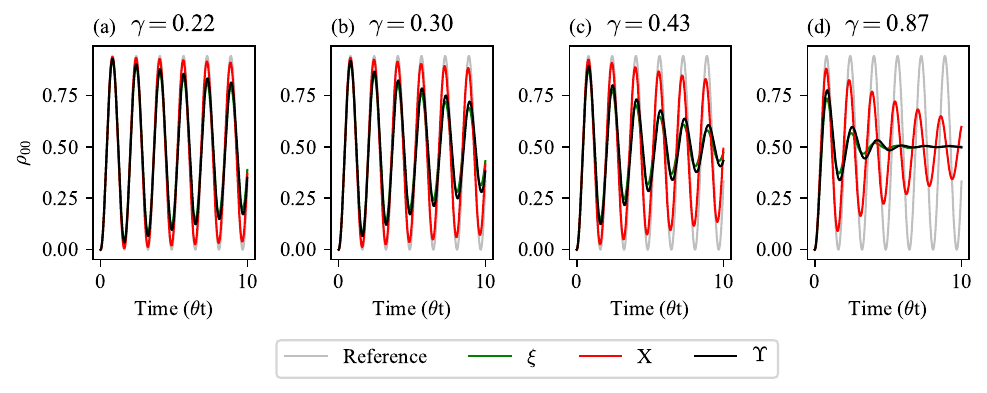}
    \caption{
    Few selected dynamics of the system described by the Hamiltonian $H_{\varepsilon,\Omega}$, with the stochastic forces applied through the Pauli $\sigma_z$ operator.}
    \label{fig:PZ_hw}
\end{figure}

\begin{figure}[H]
    \centering
    \includegraphics[width=0.82\linewidth]{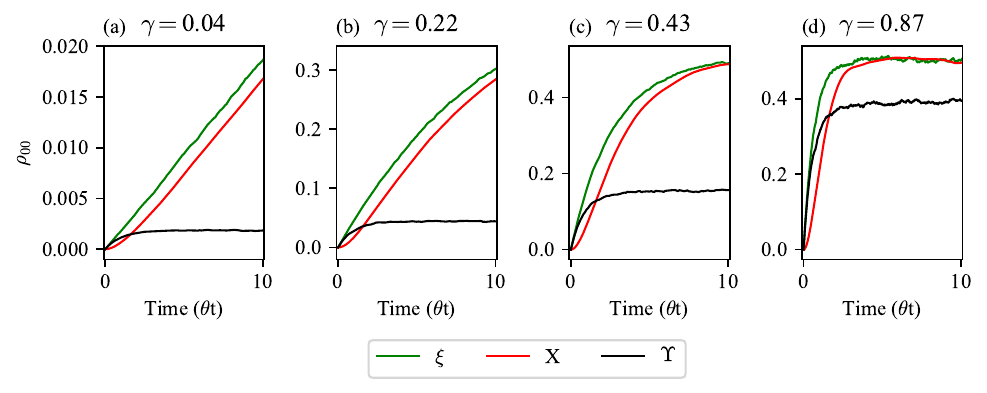}
    \caption{Few selected dynamics of the system described by the Hamiltonian $H_{0,0}$ with the stochastic forces applied through the Pauli $\sigma_x$ operator.}
    \label{fig:px_h0}
\end{figure}

\begin{figure}[H]
    \centering
    \includegraphics[width=0.82\linewidth]{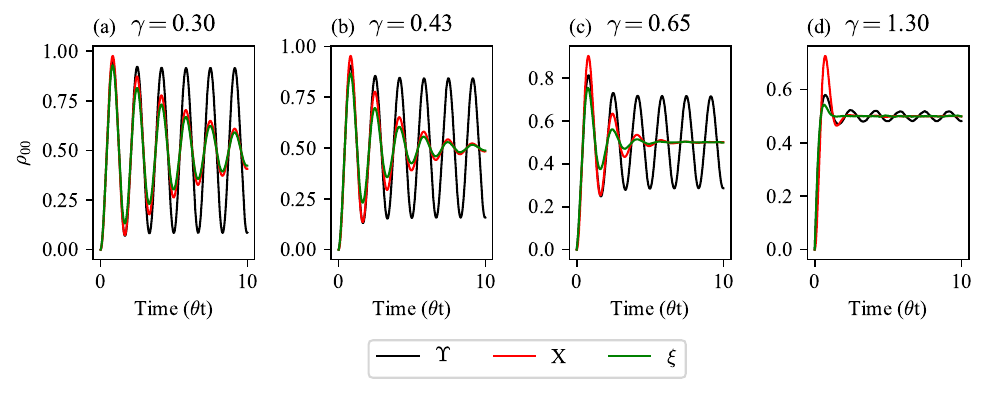}
    \caption{
    Few selected dynamics of the system described by the Hamiltonian $H_{0,\Omega}$ with the stochastic forces applied through the Pauli $\sigma_x$ operator.}
    \label{fig:px_hw}
\end{figure}

\begin{figure}[H]
    \centering
    \includegraphics[width=0.82\linewidth]{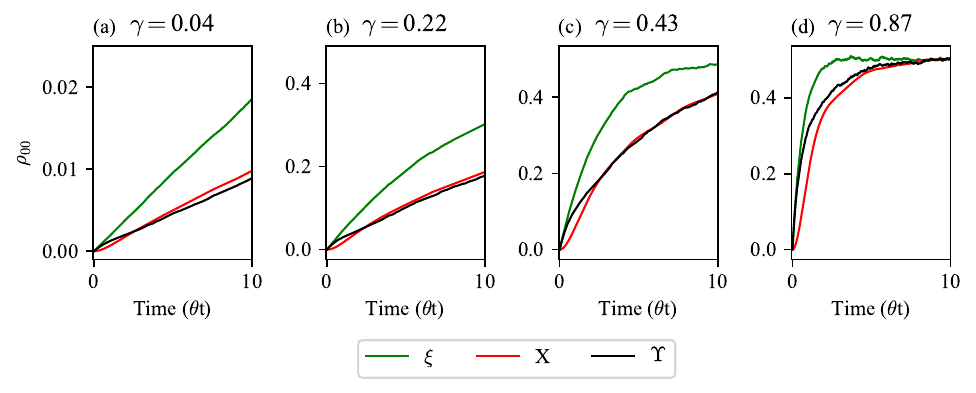}
    \caption{Few selected dynamics of the system described by the Hamiltonian $H_{\varepsilon,0}$ with the stochastic forces applied through the Pauli $\sigma_x$ operator.}
    \label{fig:px_he}
\end{figure}

\begin{figure}[H]
    \centering
    \includegraphics[width=0.82\linewidth]{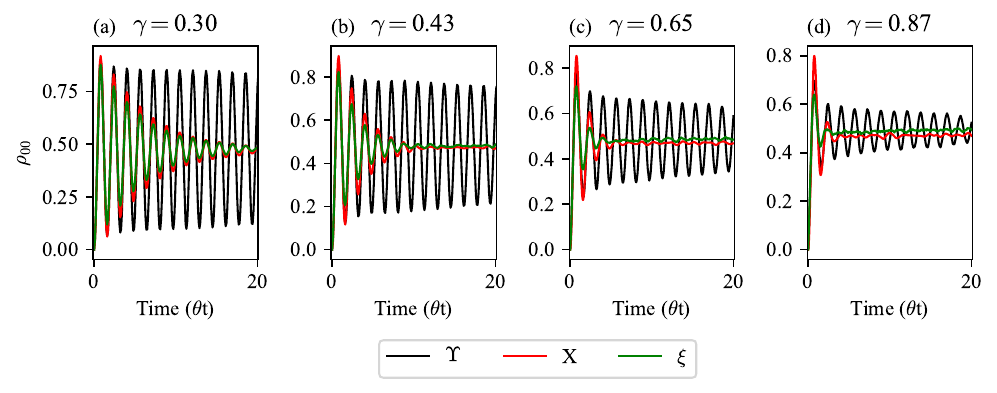}
    \caption{Few selected dynamics of the system described by the Hamiltonian $H_{\varepsilon,\Omega}$, with the stochastic forces applied through the Pauli $\sigma_x$ operator.}
    \label{fig:px_hGEN}
\end{figure}

\begin{figure}[H]
    \centering
    \includegraphics[width=0.65\linewidth]{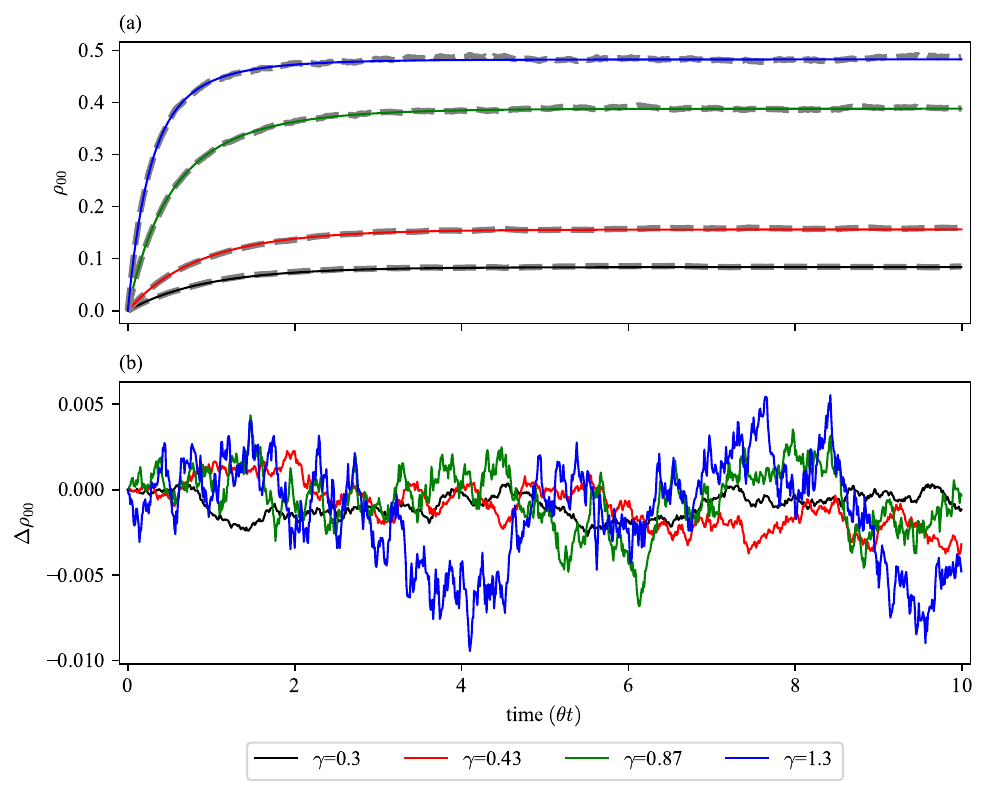}
    \caption{
    (a) Dynamics of the population $\rho_{00}$ of the $H_{0,0}$ system with $\sigma_x$ noise operator, obtained averaging over $\Upsilon_\mathrm{OU}$-driven SSE trajectories (solid lines) and from the Redfield with time-dependent coefficients (bold dashed grey lines), for different noise intensities.
    (b) Absolute error in the population dynamics.}
    \label{fig:sseVSredfield_H0}
\end{figure}

\begin{figure}[H]
    \centering
    \includegraphics[width=0.65\linewidth]{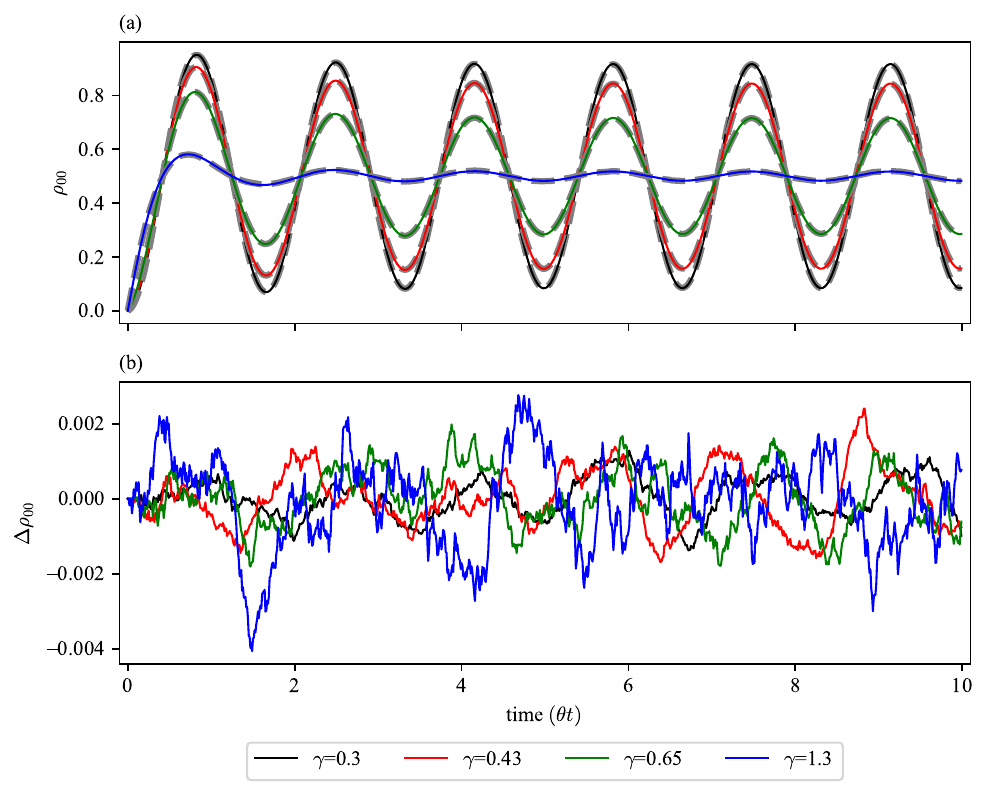}
    \caption{
    (a) Dynamics of the population $\rho_{00}$ of the $H_{0,\Omega}$ system with $\sigma_x$ noise operator, obtained averaging over $\Upsilon_\mathrm{OU}$-driven SSE trajectories (solid lines) and from the Redfield with time-dependent coefficients (bold dashed grey lines), for different noise intensities.
    (b) Absolute error in the population dynamics.}
    \label{fig:sseVSredfield_HW}
\end{figure}

\begin{figure}[H]
    \centering
    \includegraphics[width=0.65\linewidth]{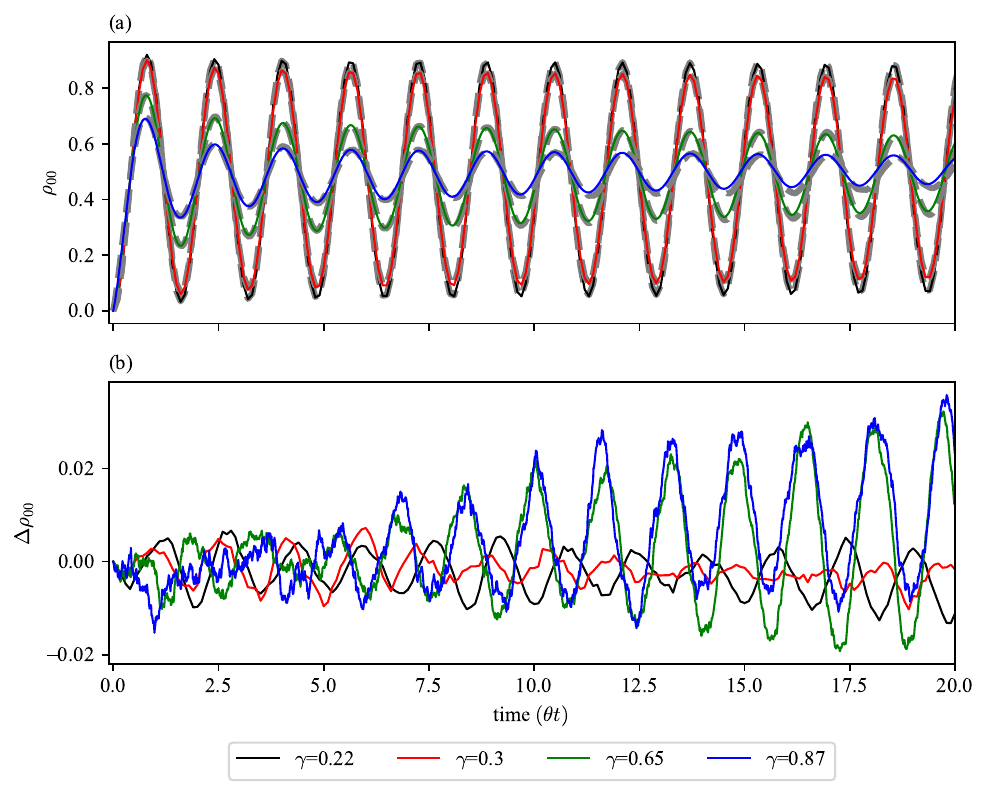}
    \caption{
    (a) Dynamics of the population $\rho_{00}$ of the $H_{\varepsilon,\Omega}$ system with $\sigma_x$ noise operator, obtained averaging over $\Upsilon_\mathrm{OU}$-driven SSE trajectories (solid lines) and from the Redfield with time-dependent coefficients (bold dashed grey lines), for different noise intensities.
    (b) Absolute error in the population dynamics.}
    \label{fig:sseVSredfield_HGEN}
\end{figure}

\begin{figure}[H]
    \centering
    \includegraphics[width=0.65\linewidth]{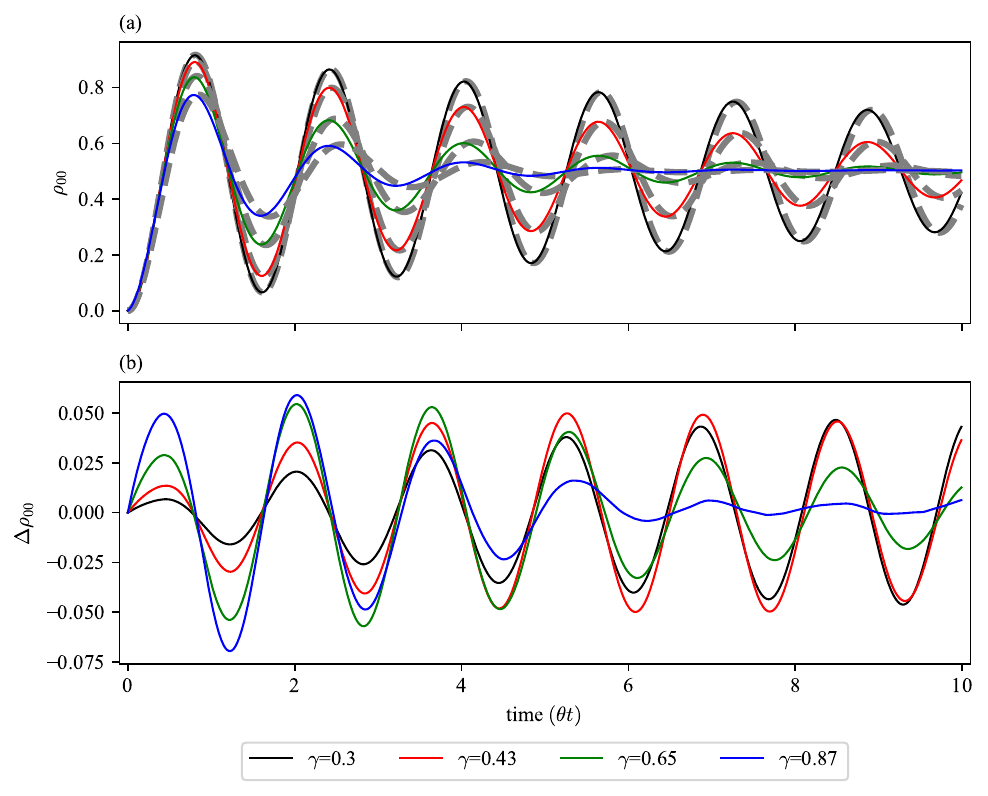}
    \caption{
    (a) Dynamics of the population $\rho_{00}$ of the $H_{\varepsilon,\Omega}$ system with $\sigma_z$ noise operator, obtained averaging over $\Upsilon_\mathrm{OU}$-driven SSE trajectories (solid lines) and from the Redfield with time-dependent coefficients (bold dashed grey lines), for different noise intensities.
    (b) Absolute error in the population dynamics.}
    \label{fig:sseVSredfield_SIGMAZ}
\end{figure}

\end{document}